%Version 3 October 2023
% See section 11 of the User Manual for version history

%%\documentclass[referee,sn-basic]{sn-jnl}% referee option is meant for double line spacing
%%\documentclass[lineno,sn-basic]{sn-jnl}% Basic Springer Nature Reference Style/Chemistry Reference Style
%%\documentclass[pdflatex,sn-basic]{sn-jnl}% Basic Springer Nature Reference Style/Chemistry Reference Style 
%%\documentclass[sn-nature]{sn-jnl}% Style for submissions to Nature Portfolio journals
%%\documentclass[sn-basic]{sn-jnl}% Basic Springer Nature Reference Style/Chemistry Reference Style
\documentclass[sn-mathphys-num]{sn-jnl}% Math and Physical Sciences Numbered Reference Style
%%\documentclass[sn-mathphys-ay]{sn-jnl}% Math and Physical Sciences Author Year Reference Style
%%\documentclass[sn-aps]{sn-jnl}% American Physical Society (APS) Reference Style
%%\documentclass[sn-vancouver,Numbered]{sn-jnl}% Vancouver Reference Style
%%\documentclass[sn-apa]{sn-jnl}% APA Reference Style 
%%\documentclass[sn-chicago]{sn-jnl}% Chicago-based Humanities Reference Style

%%%%Standard Packages
\usepackage{graphicx}%
\usepackage{multirow}%
\usepackage{amsmath,amssymb,amsfonts}%
\usepackage{amsthm}%
\usepackage{mathrsfs}%
\usepackage[title]{appendix}%
\usepackage{xcolor}%
\usepackage{textcomp}%
\usepackage{manyfoot}%
\usepackage{booktabs}%
\usepackage{algorithm}%
\usepackage{algorithmicx}%
\usepackage{algpseudocode}%
\usepackage{listings}%
\usepackage{subfig}%
\usepackage{comment}%
\usepackage{xcolor}%
%%%%

\theoremstyle{thmstyleone}%
\newcommand\beq{\begin{equation}}%
\newcommand\eeq{\end{equation}}%
\newtheorem{prop}{Proposition}%
\newtheorem{cor}{Corollary}%

\raggedbottom
%%\unnumbered% uncomment this for unnumbered level heads

\begin{document}

\title[Article Title]{Chaotic dynamics of a continuous and discrete generalized Ziegler pendulum}

\author[1]{\fnm{Stefano} \sur{Disca}}%\email{stefano.disca@edu.unife.it}
\author*[2]{\fnm{Vincenzo} \sur{Coscia}}\email{cos@unife.it}
%\equalcont{These authors contributed equally to this work.}

\affil[1]{\orgdiv{Department of Physics and Earth Science}, \orgname{University of Ferrara}, \orgaddress{\street{Via Saragat 1}, \city{Ferrara}, \postcode{44122}, \country{Italy}}}
\affil*[2]{\orgdiv{Department of Mathematics and Computer Science}, \orgname{University of Ferrara}, \orgaddress{\street{Via Machiavelli 30}, \city{Ferrara}, \postcode{44121}, \country{Italy}}}

\abstract{We present analytical and numerical results on integrability and transition to chaotic motion for a generalized Ziegler pendulum, a double pendulum subject to an angular elastic potential and a follower force. Several variants of the original dynamical system, including the presence of gravity and friction, are considered, in order to analyze whether the integrable cases are preserved or not in presence of further external forces, both potential and non-potential. Particular attention is devoted to the presence of dissipative forces, that are analyzed in two different formulations. Furthermore, a study of the discrete version is performed. The analysis of periodic points, that is presented up to period 3, suggests that the discrete map associated to the dynamical system has not dense sets of periodic points, so that the map would not be chaotic in the sense of Devaney for a choice of the parameters that corresponds to a general case of chaotic motion for the original system.}

\keywords{double pendulum, non-integrability, transition to chaos, friction, Devaney-chaos}

%%\pacs[JEL Classification]{D8, H51}

%%\pacs[MSC Classification]{35A01, 65L10, 65L12, 65L20, 65L70}

\maketitle

\section{Introduction}\label{sec_introduction}
The Ziegler pendulum is a classical dynamical system introduced by Ziegler \cite{Ziegler}, in order to study the applicability of the stability criteria to non-conservative systems \cite{Pfluger}. In his treatment, Ziegler studied the motion for small angles of a planar physical double pendulum composed by two rigid rods and subject to gravity, angular elastic potentials on the pins and a follower force along the lower rod. Such as the classical double pendulum \cite{Shinbrot, Stachowiak, Dullin}, the system shows itself to be chaotic. The simpler version considered in \cite{Polekhin}, that is a mathematical double pendulum not subject to gravity, exhibits chaotic motion too, but existence of regular solutions is found in presence of particular symmetries, regardless the initial conditions on the canonical variables of the system. Different treatments of integrability and chaotic dynamics of this system can be found in \cite{Kozlov, Thomsen}. The Ziegler pendulum is well-known for the paradoxical destabilization of the system when subject to damping \cite{Bigoni2011, Kirillov2022, Bigoni2024}, while in general a non-conservative system gains stability in presence of a dissipative force. The role of dissipation in the dynamics of the Ziegler pendulum has been studied, among others, in \cite{D'Annibale}, where it is numerically found the occurrence of stable limit-cycles in presence of damping. For an optimization point of view of damped and undamped system see also \cite{Kirillov2011}.
\begin{figure}
\centering
\includegraphics[width=8cm]{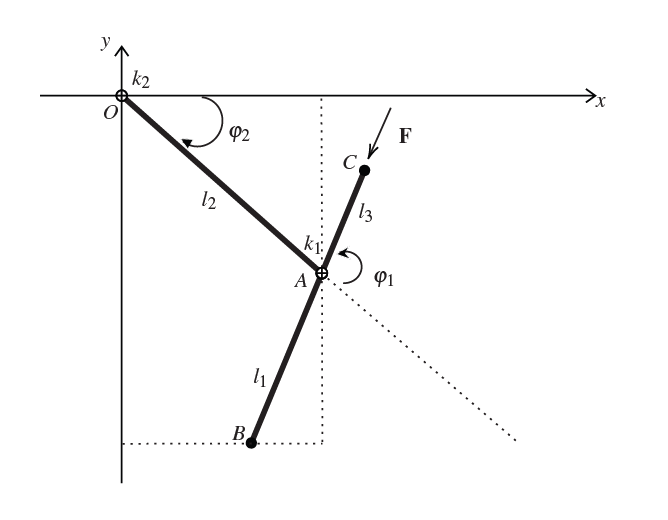}
\caption{A generalized Ziegler pendulum: three material points are joined by two massless rigid rods, with two cylindrical springs located on the pins and a follower force acting on the lower rod.}
\label{ziegler_image}
\end{figure}

In this work we define the Ziegler pendulum as a planar mathematical double pendulum consisting of three material points $A$, $B$, $C$ having mass $m_1$, $m_2$, $m_3$ respectively. The point A is held at constant distance from the point O by means of a massless rod of length $l_2$, while the points $B$ and $C$ are positioned at the ends of a second massless rod hinged in $A$ in such a way that $\overline{BA}=l_1$ and $\overline{AC}=l_3$ (see Figure \ref{ziegler_image}). On the system acts a force of size F always direct as the vector $B-C$ and two elastic forces produced by two cylindrical springs placed on the hinges $O$ and $A$ and having elastic constants $k_1$ and $k_2$. Furthermore, we write $M := m_A + m_B + m_C$ for the total mass and put $\Delta := m_B l_1 - m_C l_3$, that quantifies the distribution of mass on the lower rod.

For the rest of the paper we refer to the above definition of Ziegler pendulum, originally proposed in \cite{Polekhin}. Notice that, by taking $l_1 = 0$ or $l_3 = 0$, this generalized version returns the Ziegler pendulum defined in \cite{Ziegler}, except for the presence of gravity.

By choosing $\varphi_1$, $\varphi_2$ as generalized variables, we have the following expressions for the velocities:
\begin{subequations}
\beq\label{v_A}
v_A^2 = l_2^2 \dot \varphi_2^2
\eeq
\beq\label{v_B}
v_B^2 = l_2^2 \dot \varphi_2^2 + l_1^2 (\dot \varphi_1 + \dot \varphi_2)^2 - 2 l_1 l_2 \dot \varphi_2 (\dot \varphi_1 + \dot \varphi_2) \cos \varphi_1
\eeq
\beq\label{v_C}
v_C^2 = l_2^2 \dot \varphi_2^2 + l_3^2 (\dot \varphi_1 + \dot \varphi_2)^2 + 2 l_2 l_3 \dot \varphi_2 (\dot \varphi_1 + \dot \varphi_2) \cos \varphi_1 \,.
\eeq
\end{subequations}
The elastic potential energy is given by
\beq
\Pi = \frac{1}{2} k_1 \varphi_1^2 + \frac{1}{2} k_2 \varphi_2^2 \,.
\eeq
Furthermore the system is subject to a follower force with constant magnitude
\beq
\textbf{F} = F \frac{\textbf{r}_A - \textbf{r}_C }{|\textbf{r}_A - \textbf{r}_C| } = \begin{pmatrix} - F \cos (\varphi_1 + \varphi_2) \\ F \sin (\varphi_1 + \varphi_2) \end{pmatrix} \,,
\eeq
to which the generalized forces
\begin{subequations}
\beq
Q_1 = \displaystyle \textbf{F} \cdot \frac{\partial \textbf{r}_C}{\partial \varphi_1} = 0
\eeq
\beq
Q_2 = \displaystyle \textbf{F} \cdot \frac{\partial \textbf{r}_C}{\partial \varphi_2} = - F l_2 \sin \varphi_1
\eeq
\end{subequations}
are associated. \\The kinetic energy has the form
\beq
T = \frac{1}{2} \bigg( A_{11} \dot \varphi_1^2 + (A_{12} + A_{21}) \dot \varphi_1 \dot \varphi_2 + A_{22} \dot \varphi_2^2  \bigg) \,,
\eeq
where
\begin{subequations}\label{coeff_A}
\beq
A_{11} = m_B l_1^2 + m_C l_3^2
\eeq
\beq
A_{12} = A_{21} = A_{11} - \Delta l_2 \cos \varphi_1
\eeq
\beq
A_{22} = A_{11} + M l_2^2 - 2 \Delta l_2 \cos \varphi_1
\eeq
\end{subequations}
for the standard Ziegler pendulum. \\The Euler-Lagrange equations can be written as follows:
\begin{subequations}\label{motion2}
\beq
A_{11} \ddot{\varphi}_1 + A_{12} \ddot{\varphi}_2 = r_1
\eeq
\beq
A_{21} \ddot{\varphi}_1 + A_{22} \ddot{\varphi}_2 = r_2 \,,
\eeq
\end{subequations}
where
\begin{subequations}\label{coeff_r}
\beq
r_1 = - k_1 \varphi_1 + \Delta l_2 \dot{\varphi}_2^2 \sin \varphi_1
\eeq
\beq
r_2 = - k_2 \varphi_2 - \Delta l_2 \dot{\varphi}_1 (\dot{\varphi}_1 + 2 \dot{\varphi}_2) \sin \varphi_1 - F l_2 \sin \varphi_1 \,.
\eeq
\end{subequations}
An alternative form for the system \eqref{motion2}, useful also for the numerical calculations, is the following one:
\begin{subequations}\label{motion4}
\beq
\dot{\varphi}_1 = v_1
\eeq
\beq
\dot{v}_1 = \frac{ A_{22} r_1 - A_{12} r_2 }{ A_{11} A_{22} - (A_{12})^2 }
\eeq
\beq
\dot{\varphi}_2 = v_2
\eeq
\beq
\dot{v}_2 = \frac{ A_{11} r_2 - A_{12} r_1 }{ A_{11} A_{22} - (A_{12})^2 } \,.
\eeq
\end{subequations}
It has been proved in \cite{Polekhin} that for $k_2 = 0$ and in presence of two particular symmetries the system is integrable: in the sense of Liouville \cite{Arnold} for the Hamiltonian case $F = 0$, in the sense of Jacobi \cite{Jacobi} for the non-Hamiltonian case $\Delta =  0$. Furthermore, in these two cases there exists a family of periodic solutions that intersect the plane $\varphi_1 = 0$, and the integrability survives for small breaking of these symmetries (i.e. $F$ or $\Delta$ sufficiently small). The integrability of the system results from the cyclicity of the variable $\varphi_2$ in the case $k_2 = 0$; in particular for the Hamiltonian case ($F = 0$) this cyclicity gives rise to a further first integral for the system, that is the conjugate momentum $K = \frac{\partial L}{\partial \dot{\varphi}_2}$.

In view of the results indicated above, in the sequel we analyze the role in the dynamics of further external forces, both conservative and not, in order to verify whether or not the integrabilty of the system survives in presence of them and for $k_2 = 0$. Moreover, we study a discrete map associated to the system \eqref{motion4} in order to check if it verifies the definition of chaoticity in the sense of Devaney \cite{Devaney, Banks} for an arbitrary choice on $k_2$. All the numerical simulations presented have been produced by a Runge-Kutta method of fourth order, with a size step $dt = 0.005$ and a number of iteration $n = 50000$-$500000$, implemented in Python; we also report the computation of the Lyapunov exponents for some cases, by using a proper MATLAB code\footnote{Copyright (c) 2004, Vasiliy Govorukhin. All rights reserved.} that exploits an algorithm proposed in \cite{Wolf} for the calculation of Lyapunov exponents of systems of ordinary differential equations.

The paper is organized as follows. In Section \ref{sec_gravity} we analyze the role of gravity on the dynamics. The effect of linear elastic potentials is treated in Section \ref{sec_springs} in Section \ref{sec_friction}, while a brief analysis of three simple geometric variants.  Then, the role of different kinds of friction on the system is discussed in Section \ref{sec_geom}. Moreover, in Section \ref{sec_discrete} we study a discrete version of the system, where we propose a conjecture on properties of the sets of periodic points. Finally in Section \ref{sec_conclusion} we point out the concluding remarks and reserch perspectives.

\section{Gravity}\label{sec_gravity}
In this section we consider a \emph{vertical} Ziegler pendulum subject to gravity, that is we include in the equations of motion a further potential energy
\beq
V_G = g \sum_{j=A,B,C} m_j y_j = -M g l_2 \sin \varphi_2 + \Delta g \sin (\varphi_1 + \varphi_2).
\eeq
The terms $A_{ij}$ defined in the \eqref{coeff_A} do not change, while for $r_j$ we have
\begin{subequations}
\beq
r_1 = - k_1 \varphi_1 + \Delta l_2 \dot{\varphi}_2^2 \sin \varphi_1 + \Delta g \cos (\varphi_1 + \varphi_2)
\eeq
\beq
r_2 = - \Delta l_2 \dot{\varphi}_1 (\dot{\varphi}_1 + 2 \dot{\varphi}_2) \sin \varphi_1 - F l_2 \sin \varphi_1 - M g l_2 \cos \varphi_2 + \Delta g \cos (\varphi_1 + \varphi_2) \,.
\eeq
\end{subequations}
The integrability of the system is lost for the Hamiltonian case $F = 0$, since the variable $\varphi_2$ is not cyclic anymore. This is not unexpected, since in absence of the external force the system is actually a double pendulum subject to an elastic potential, that does not introduce terms in order to compensate the chaotic behavior.
\begin{figure}
\centering
\includegraphics[width=7cm]{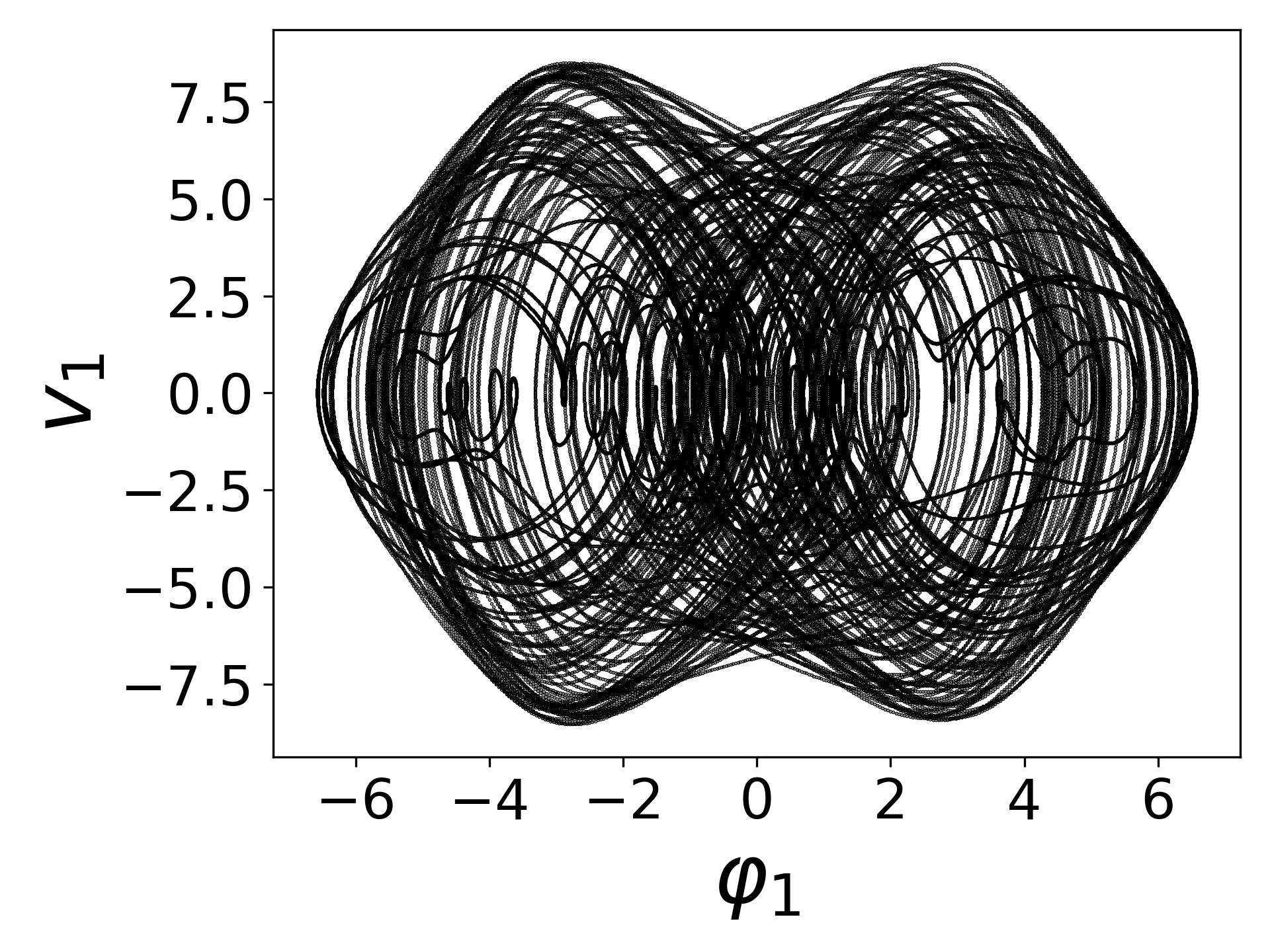}
\caption{Chaotic motion in presence of gravity. Parameters: $m_A = 1$, $m_B = 1.5$, $m_C = 1$,  $l_1 = 1$, $l_2 = 1$, $l_3 = 1$, $k_1 = 2$, $F = 0$. Initial conditions: $\varphi_1(0) = \pi$, $\varphi_2(0) = 0.1$, $v_1(0) = v_2(0) = 0$.}
\end{figure}
\noindent
Anyway, for the non-Hamiltonian case $\Delta = 0$ we have
\begin{subequations}
\beq
A_{11} = A_{12} = A_{21} = m_B l_1^2 + m_C l_3^2
\eeq
\beq
A_{22} = A_{11} + M l_2^2
\eeq
\beq
r_1 = - k_1 \varphi_1
\eeq
\beq
r_2 = - F l_2 \sin \varphi_1 - M g l_2 \cos \varphi_2\,.
\eeq
\end{subequations}
If we think of an angle-dependent form
\beq\label{force_angle}
F(\varphi_1, \varphi_2) = - M g \frac{\cos \varphi_2}{\sin \varphi_1}
\eeq
for the force, the term $r_2$ vanishes and the system \eqref{motion2} reduces to
\begin{subequations}
\beq
A_{11} (\ddot{\varphi}_1 + \ddot{\varphi}_2) = - k_1 \varphi_1
\eeq
\beq
A_{11} \ddot{\varphi}_1 + A_{22} \ddot{\varphi}_2 = 0 \,,
\eeq
\end{subequations}
that is (in this case all coefficients are constant)
\begin{subequations}
\beq\label{force_angle_eq1}
\ddot{\varphi}_1 + \ddot{\varphi}_2 = \frac{- k_1}{A_{11}} \varphi_1
\eeq
\beq\label{force_angle_eq2}
A_{11} \dot{\varphi}_1 + A_{22} \dot{\varphi}_2 = K \,.
\eeq
\end{subequations}
By substituting \eqref{force_angle_eq2} in \eqref{force_angle_eq1} we obtain the following equation independent of $\varphi_2$
\beq
\ddot{\varphi}_1 \bigg( 1 - \frac{A_{11}}{A_{22}} \bigg) + \frac{k_1}{A_{11}} \varphi_1 = 0 \,,
\eeq
that is a one-dimensional harmonic oscillator with frequency
\beq
\omega^2 := \frac{k_1}{A_{11}} \frac{A_{22}}{A_{22} - A_{11} } \,.
\eeq
Given that, the motion for $\varphi_2$ is solved by quadratures
\beq
\varphi_2(t) = \frac{K}{A_{11}} t - \int_{t_0}^t A_{11} \dot{\varphi}_1(t') dt'.
\eeq
This is a very simple result, but it is worth to observe that, in order to obtain it, we must impose a non constant magnitude for the force, that is an \emph{ad hoc} form for the external force that allows $r_2$ to vanish. Furthermore, the expression \eqref{force_angle} is singular if $\varphi_1 = k \pi$ and $\varphi_2 \ne k \pi + \frac{\pi}{2}$ with $k \in \mathbb{N}$, that is for an infinite set of values and initial conditions.

However, even if the gravity destroys the integrable cases known for the Ziegler pendulum, the system shows isolated cases of regular motion for a proper choice of parameters and initial conditions. An interesting feature, that can be found in other chaotic systems \cite{Yanchuk}, is the fast transition shown from a chaotic regime to a regular one. As an example, the motion for two close values of $l_3$ is shown in Figure \ref{gravity_periodic_compare} and \ref{gravity_chaos_compare} (initial conditions and all the other parameters are kept the same). The transition from a regular motion to a chaotic one is confirmed by the computation of the Lyapunov exponents: null Lyapunov exponents are associated to the motion shown in Figure \ref{gravity_periodic_compare}, while the motion shown in Figure \ref{gravity_chaos_compare} exhibits a positive Lyapunov exponent $\lambda_{\varphi_1} \sim 0.19$.
\begin{figure}
\centering
\includegraphics[width=7cm]{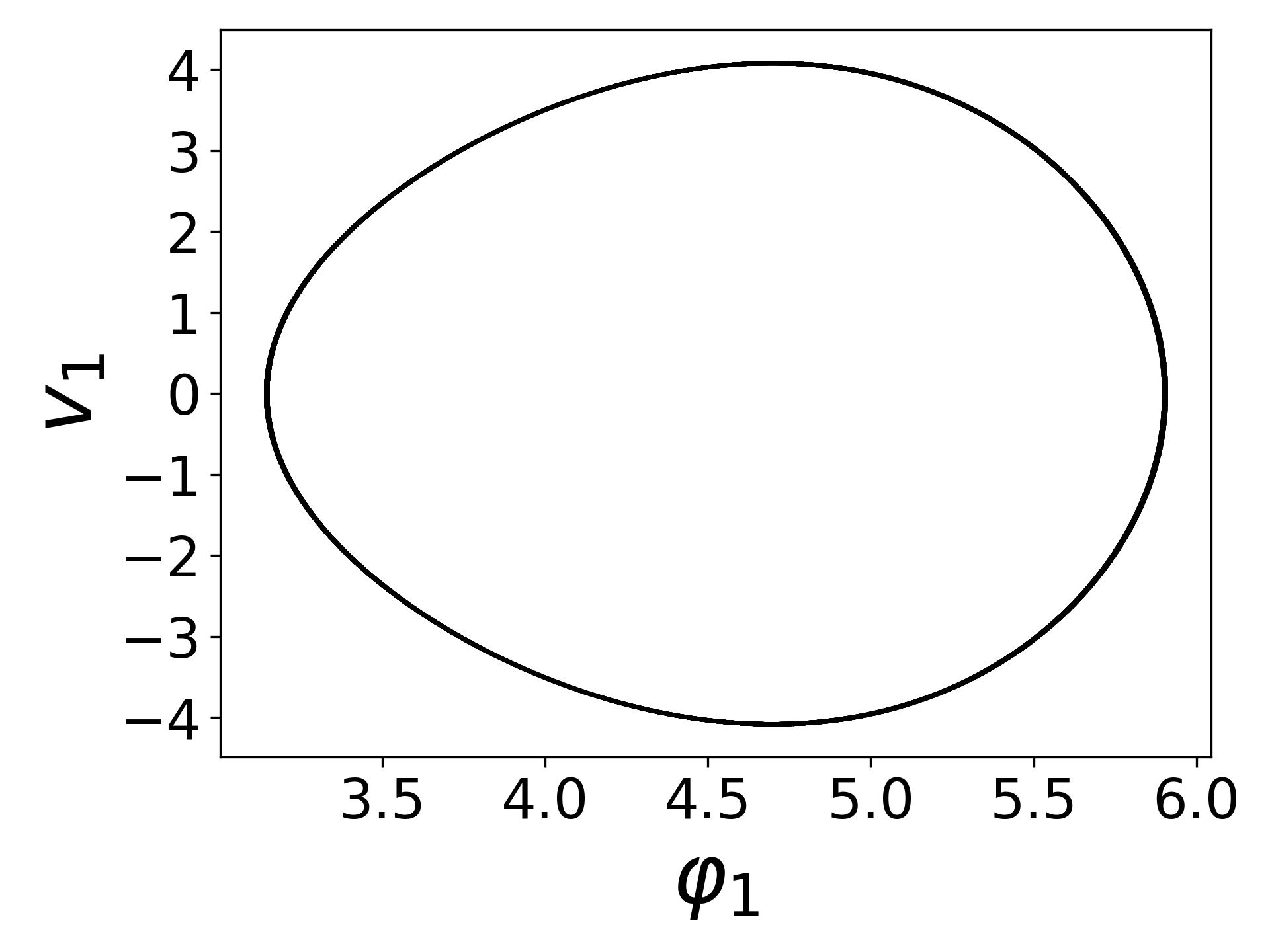}
\caption{Periodic motion in presence of gravity (compare with Figure \ref{gravity_chaos_compare}). Parameters: $m_A = 1$, $m_B = 1$, $m_C = 1$,  $l_1 = 1$, $l_2 = 1$, $l_3 = 1.46$, $k_1 = 1$, $F = 0$. Initial conditions: $\varphi_1(0) = \pi$, $\varphi_2(0) = 0$, $v_1(0) = v_2(0) = 0$.}
\label{gravity_periodic_compare}
\end{figure}
\begin{figure}
\centering
\includegraphics[width=7cm]{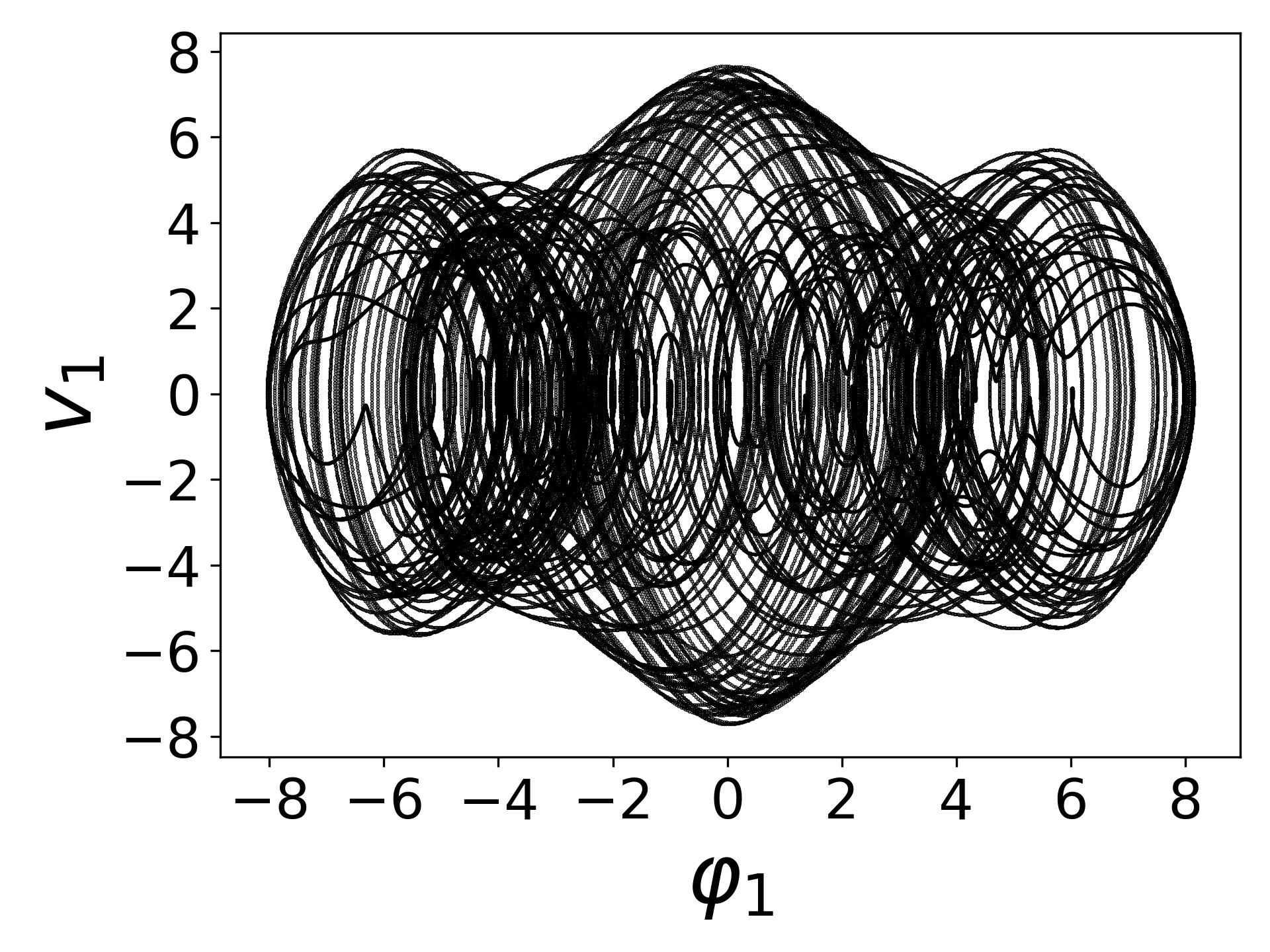}
\caption{Chaotic motion in presence of gravity (compare with Figure \ref{gravity_periodic_compare}). Parameters: $m_A = 1$, $m_B = 1$, $m_C = 1$,  $l_1 = 1$, $l_2 = 1$, $l_3 = 1.45$, $k_1 = 1$, $F = 0$. Initial conditions: $\varphi_1(0) = \pi$, $\varphi_2(0) = 0$, $v_1(0) = v_2(0) = 0$.}
\label{gravity_chaos_compare}
\end{figure}

\section{Linear springs}\label{sec_springs}
In this section we consider Ziegler pendulum with two further linear springs that join the fixed point with the two ends of the lower rods.
\begin{figure}
\centering
\includegraphics[width=8cm]{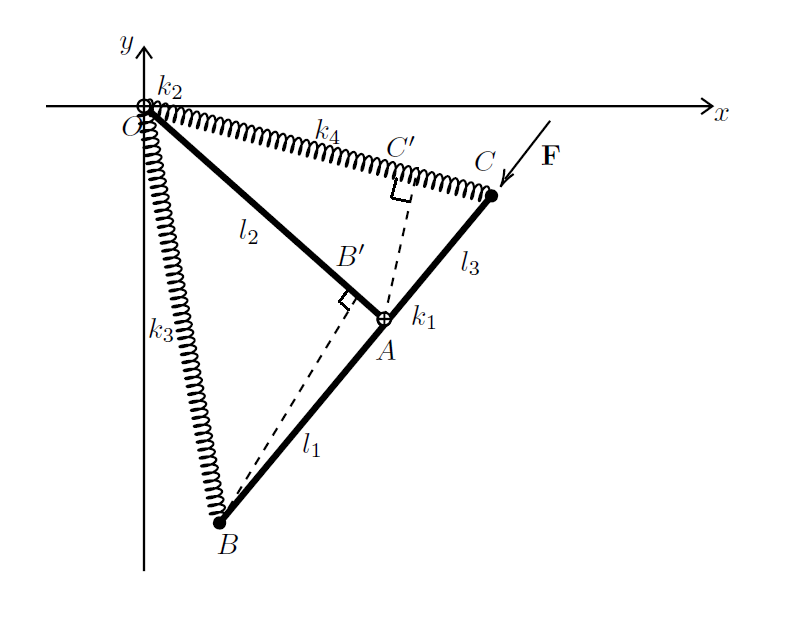}
\caption{The Ziegler pendulum with two linear springs (angles defined in Figure \ref{ziegler_image}).}
\end{figure}
\noindent
The system is now subject to a further elastic potential
\beq
V_{EL} = \frac{1}{2} k_{OB} |\overline{OB}|^2 + \frac{1}{2} k_{OC} |\overline{OC}|^2 \,.
\eeq
Since the elongation of these linear springs depends only on the rotation of the lower rod, as one can figure out from a physical point of view, the elastic potential must be a function of $\varphi_1$; therefore the variable $\varphi_2$ remains cyclic and all the previous results still hold for this model. The quadratic elastic potential is easily evaluated by computing the lengths of the springs. If we denote $B'$ the orthogonal projection of $B$ on the rod $\overline{OA}$, we have
\beq
|\overline{OB}|^2 = l_1^2 + l_2^2 - 2 l_1 l_2 \cos \varphi_1
\eeq
and in analogous way, if we call $A'$ the orthogonal projection of $A$ on the spring $\overline{OC}$, we have
\beq
|\overline{OC}|^2 = l_2^2 + l_3^2 + 2 l_2 l_3 \cos \varphi_1 \,,
\eeq
so that the quadratic elastic potential can be written as
\beq
V_{EL} = l_2 (k_{OC} l_3 - k_{OB} l_1) \cos \varphi_1
\eeq
up to a constant. The equations of motion change as follows: the terms $A_ {ij}$ and $r_2$ still follow the \eqref{coeff_A} and \eqref{coeff_r}, while $r_1$ earns a term due to $\frac{\partial V_{EL}}{\partial \varphi_1}$
\beq
r_1 = - k_1 \varphi_1 + \Delta l_2 \dot{\varphi}_2^2 \sin \varphi_1 - l_2 (k_{OC} l_3 - k_{OB} l_1) \sin \varphi_1 \,.
\eeq
Due to the cyclicity of $\varphi_2$, the cases $F = 0$ and $\Delta = 0$ are still integrable in the sense previously discussed. Furthermore, a new symmetry arises, associated to the case $k_{OB} l_1 = k_{OC} l_3$. This situation physically corresponds to have two external elastic forces that compensate each other, so the motion is the same that we would have for a standard Ziegler pendulum.
\begin{figure}
\centering
\includegraphics[width=7cm]{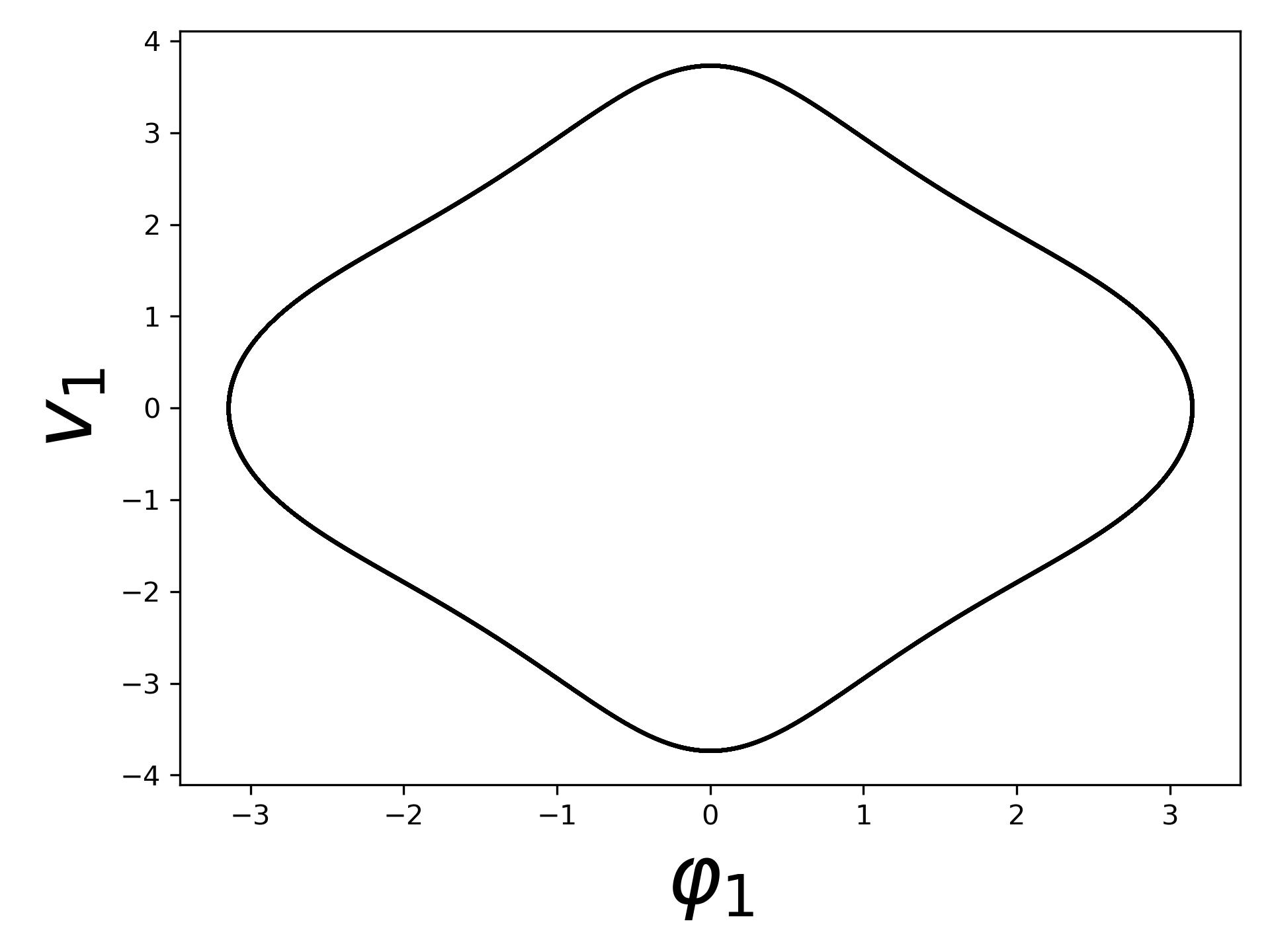}
\caption{Closed trajectory in presence of two linear springs, $F = 0$ and $k_{OB} l_1 = k_{OC} l_3$. Parameters: $m_A = 1$, $m_B = 1$, $m_C = 2$,  $l_1 = 1$, $l_2 = 1$, $l_3 = 2$, $k_1 = 2$, $k_{OB} = 1$, $k_{OC} = 0.5$. Initial conditions: $\varphi_1(0) = \pi$, $\varphi_2(0) = 0.2$, $v_1(0) = 0.1$, $v_2(0) = 0$.}
\end{figure}
\begin{figure}
\centering
\includegraphics[width=7cm]{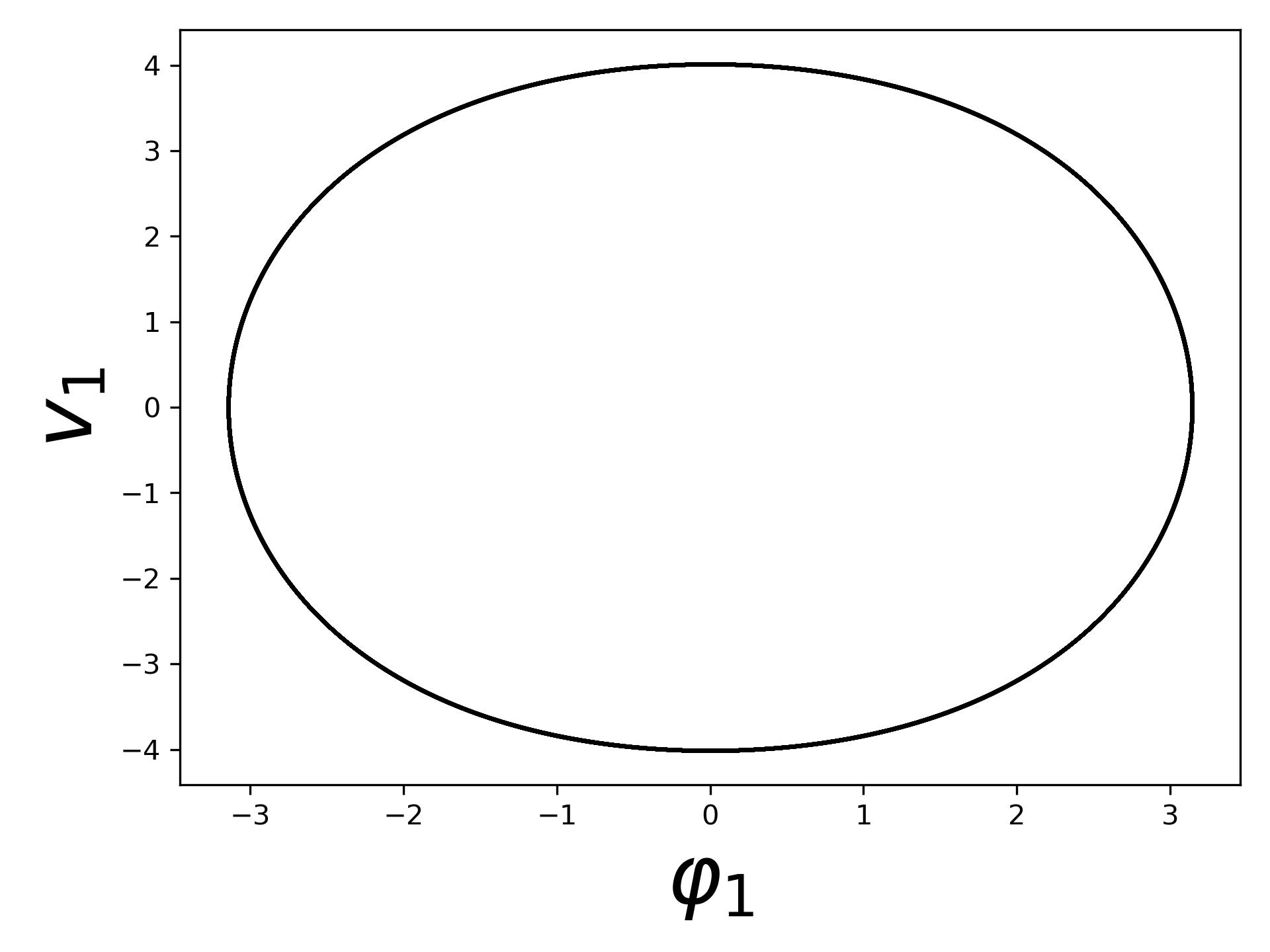}
\caption{Closed trajectory in presence of two linear springs, $\Delta = 0$ and $k_{OB} l_1 = k_{OC} l_3$. Parameters: $m_A = 1$, $m_B = 1$, $m_C = 2$,  $l_1 = 1$, $l_2 = 1$, $l_3 = 0.5$, $k_1 = 2$, $F = 2$, $k_{OB} = 1$, $k_{OC} = 2$. Initial conditions: $\varphi_1(0) = \pi$, $\varphi_2(0) = 0.2$, $v_1(0) = 0.1$, $v_2(0) = 0$.}
\end{figure}
\noindent
However, the system is integrable for all the possible values of $k_{OB}$ and $k_{OC}$, in the sense of Liouville if $F = 0$ or in the sense of Jacobi if $\Delta = 0$. In particular all the results proved in \cite{Polekhin} still hold, such as for example the existence of a family of periodic solutions that intersect the plane $\varphi_1 = 0$. The presence of the linear springs of course modifies the periodic solutions of the system; in particular, they show to have intersections, absent in the standard case, as depicted in Figure \ref{springs_family_F0}.
\begin{figure}
\centering
\includegraphics[width=7cm]{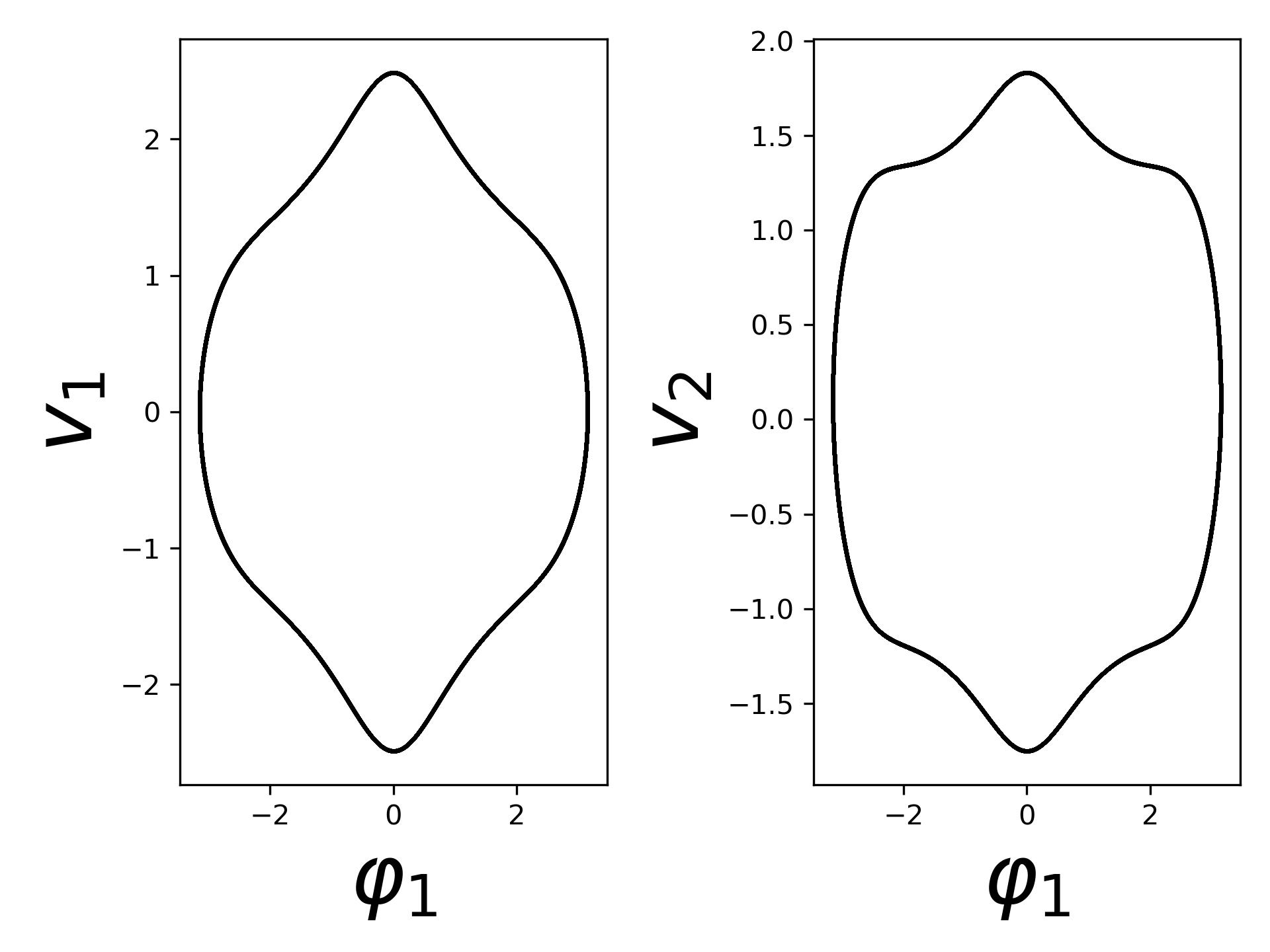}
\caption{Closed trajectories in presence of two linear springs, $F = 0$ and $k_{OB} l_1 \ne k_{OC} l_3$. Parameters: $m_A = 2$, $m_B = 1$, $m_C = 3$,  $l_1 = 1$, $l_2 = 1$, $l_3 = 3$, $k_1 = 2.5$, $k_{OB} = 3$, $k_{OC} = 0$. Initial conditions: $\varphi_1(0) = \pi$, $\varphi_2(0) = 0.2$, $v_1(0) = 0.1$, $v_2(0) = 0$.}
\end{figure}
\begin{figure}
\centering
\includegraphics[width=7cm]{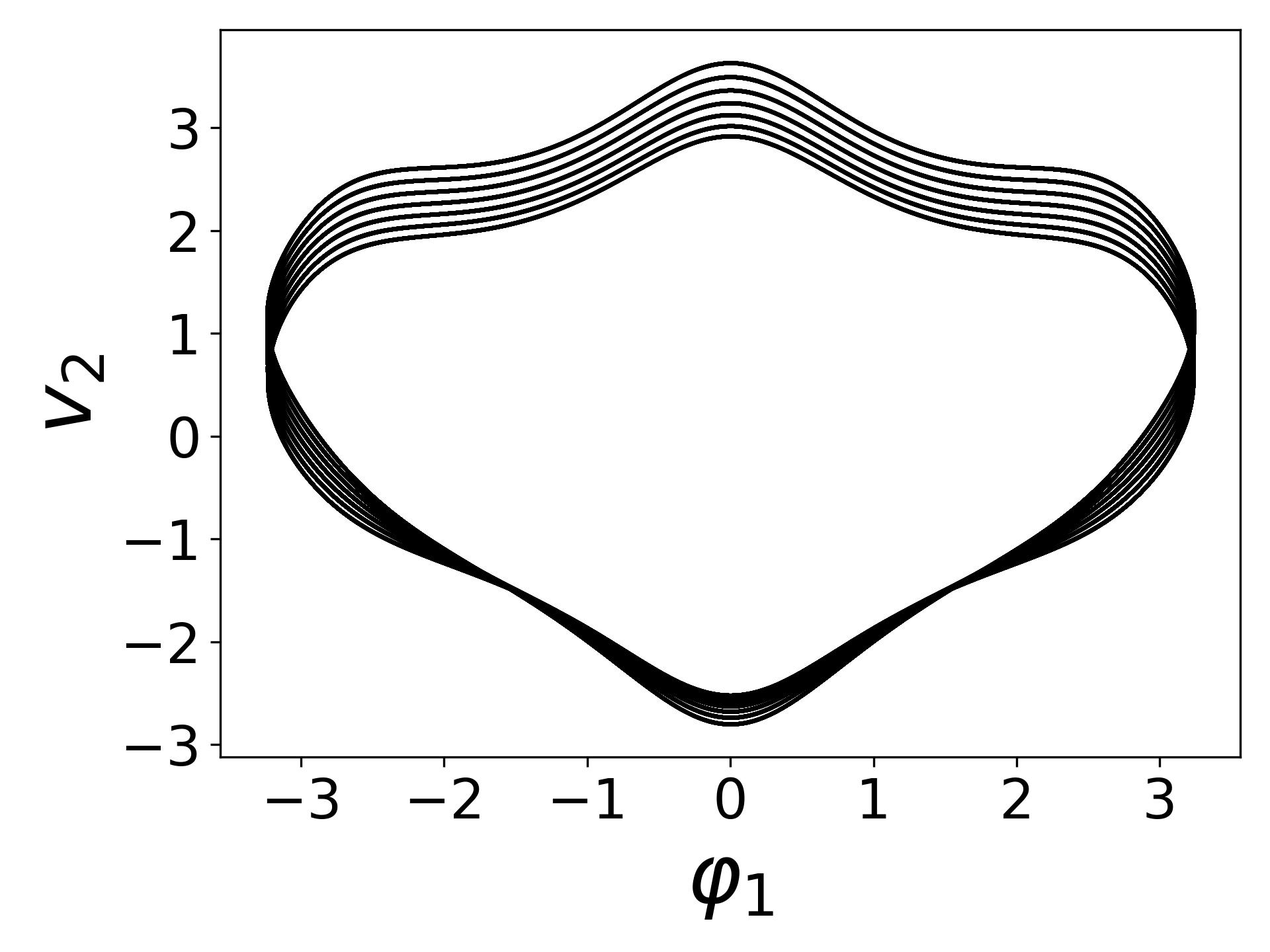}
\caption{A family of periodic solutions in presence of two linear springs and $F = 0$. Parameters: $m_A = 2$, $m_B = 1$, $m_C = 3$,  $l_1 = 1$, $l_2 = 1$, $l_3 = 3$, $k_1 = 2.5$, $k_{OB} = 3$, $k_{OC} = 1$. Initial conditions: $\varphi_1(0) = \pi$, $\varphi_2(0) = \frac{\pi}{2}$, $v_1(0) = 0.5$, $v_2(0) = 0.1 j$ with $j = 0, \dots, 6$.}
\label{springs_family_F0}
\end{figure}

\section{Three geometric variants}\label{sec_geom}
Let us consider a Ziegler pendulum with two physical homogeneous rods, as introduced in \cite{Ziegler} (see Figure \ref{two_phys}). We denote by $C_U = (x_U, y_U)$ and $C_D = (x_D, y_D)$ the centers of mass of the upper and lower pendulum, with masses $m_U$ and $m_D$ respectively; we still use $\varphi_{1,2}$ as generalized coordinates of the system and refer to the lengths previously introduced.
\begin{figure}
\centering
\includegraphics[width=8cm]{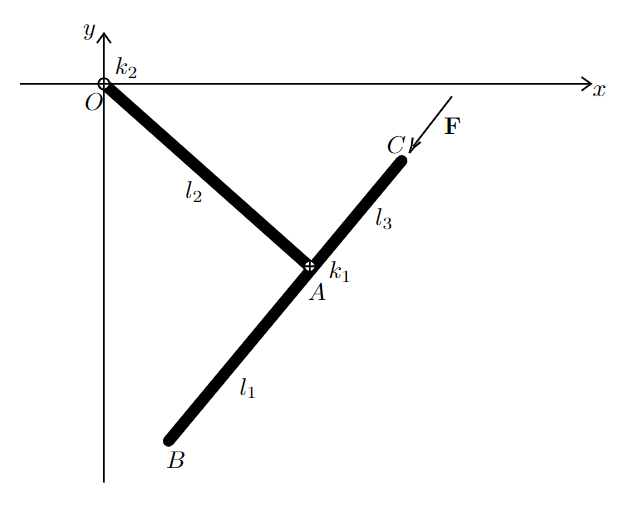}
\caption{A Ziegler pendulum with two physical rods (angles defined in Figure \ref{ziegler_image}).}
\label{two_phys}
\end{figure}
Denoting by $\theta = \varphi_1 + \varphi_2$ the angle between the lower rod and the horizontal axis, it is easy to find
\beq
\begin{cases}
x_U = \frac{l_2}{2} \cos \varphi_2 \\
y_U = - \frac{l_2}{2} \sin \varphi_2
\end{cases}
\eeq
and
\beq
\begin{cases}
x_D = l_2 \cos \varphi_2 - (\frac{l_1 - l_3}{2}) \cos(\varphi_1 + \varphi_2) \\
y_D = - l_2 \sin \varphi_2 + (\frac{l_1 - l_3}{2}) \sin(\varphi_1 + \varphi_2) \,.
\end{cases}
\eeq
The kinetic energy of the upper pendulum, that rotates around the fixed point $O$, is
\beq\label{TU_physical}
T_U = \frac{1}{6} m_U l_2^2 \dot \varphi_2^2 \,,
\eeq
while the kinetic energy of the lower pendulum is found to be
\beq\begin{split}\label{TD_physical}
T_D =& \frac{1}{2} m_D \bigg( l_2^2 \dot \varphi_2^2 + \bigg( \frac{l_1 - l_3}{2} \bigg)^2 (\dot \varphi_1 + \dot \varphi_2)^2 - l_2 (l_1 - l_3) \dot \varphi_2 (\dot \varphi_1 + \dot \varphi_2) \cos \varphi_1 \bigg) + \\
&+ \frac{1}{24} m_D (l_1 + l_3)^2 \dot \varphi_1^2 \,.
\end{split}\eeq
The parameters in the equations of motion \eqref{motion2} are modified in the following way:
\begin{subequations}
\beq
A_{11} = \frac{m_D}{3} (l_1^2 + l_3^3 - l_1 l_3)
\eeq
\beq
A_{12} = A_{21} = \frac{m_D}{4} (l_1 - l_3)^2  - \frac{m_D}{2} l_2 (l_1 - l_3) \cos \varphi_1
\eeq
\beq
A_{22} =  \frac{m_U}{3} l_2^2 + \frac{m_D}{4} \bigg( 4 l_2^2 + (l_1 - l_3)^2 \bigg) - m_D l_2 (l_1 - l_3) \cos \varphi_1
\eeq
\beq
r_1 = - k_1 \varphi_1 + \frac{m_D}{2} l_2 (l_1 - l_3) \dot \varphi_2^2 \sin \varphi_1
\eeq
\beq
r_2 = - k_2 \varphi_2 - \frac{m_D}{2} l_2 (l_1 - l_3) \dot \varphi_1 (\dot \varphi_1 + 2 \dot \varphi_2) \sin \varphi_1 - F l_2 \sin \varphi_1 \,.
\eeq
\end{subequations}
We formally recover the same equations of motion valid for the standard Ziegler pendulum. The symmetry $m_1 l_1 - m_3 l_3 = 0$ is replaced by the symmetry $l_1 - l_3 = 0$, which again corresponds to have the lower center of mass located on the pin between the two rods. The usual two integrable cases hold and the motion is formally the same as for the standard Ziegler pendulum.

For a Ziegler pendulum with a physical upper rod and a mathematical lower rod (see Figure \ref{phys_upper}), we recover the standard Ziegler pendulum with just a constant correction on the coefficient $A_{22}$, that is ($m_U = m_A$)
\beq
A_{22} = A_{11} + \bigg( \frac{1}{3} m_A + m_B + m_C \bigg) l_2^2 - 2 \Delta l_2 \cos \varphi_2 \,.
\eeq
\begin{figure}
\centering
\includegraphics[width=8cm]{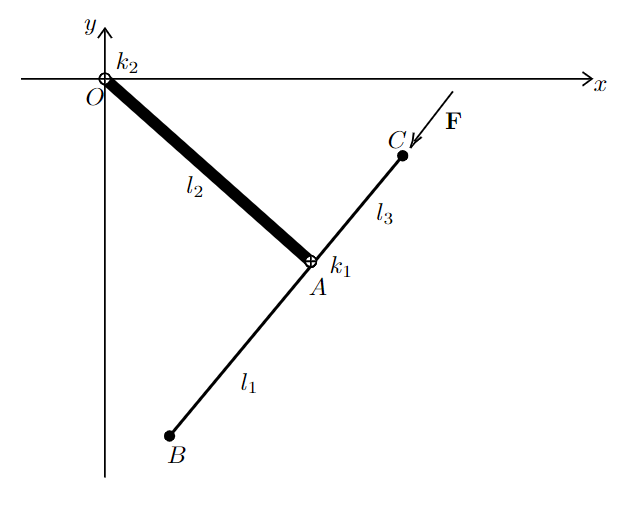}
\caption{A Ziegler pendulum with a physical upper rod and a mathematical lower rod (angles defined in Figure \ref{ziegler_image}).}
\label{phys_upper}
\end{figure}
Conversely, for a Ziegler pendulum with a mathematical upper rod and a physical lower rod (see Figure \ref{phys_lower}), we recover the physical Ziegler pendulum with just a constant correction on the coefficient $A_{22}$, that is ($m_U = m_A$)
\beq
A_{22} =  m_U l_2^2 + \frac{m_D}{4} \bigg( 4 l_2^2 + (l_1 - l_3)^2 \bigg) - m_D l_2 (l_1 - l_3) \cos \varphi_1 \,.
\eeq
\begin{figure}
\centering
\includegraphics[width=8cm]{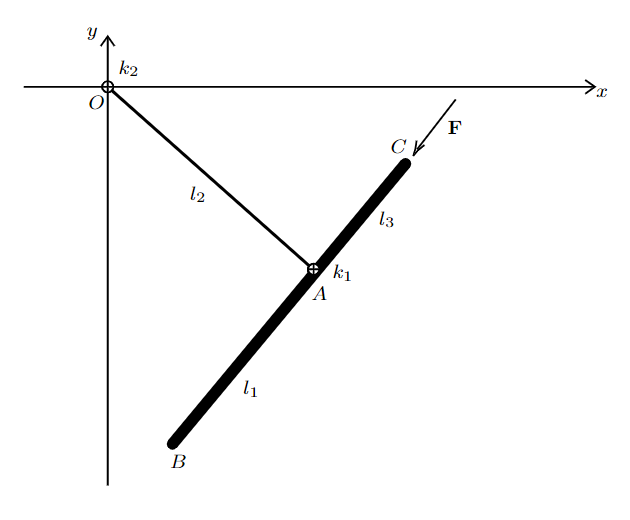}
\caption{A Ziegler pendulum with a mathematical upper rod and a physical lower rod (angles defined in Figure \ref{ziegler_image}).}
\label{phys_lower}
\end{figure}
For both mixed versions the correction does not change the formal shape of the equations of motion and there are no differences in the qualitative motion of the system.

Finally, all these three variants of the Ziegler pendulum (a physical double pendulum and two mixed double pendulums) exhibit the same features of the standard Ziegler pendulum. This is not surprising, since these variants simply change the positions of the centers of mass of the two rods without adding relevant terms in the equations of motion.

\section{Friction}\label{sec_friction}
In this section we analyze the Ziegler pendulum subject to friction in two possible senses: a viscous friction that acts on the three material points and a friction force that acts on the pins of the pendulum being opposite to the rotation.

\subsection{Stokes friction}
Let's consider a friction force that follows
\beq\label{friction_force}
\textbf{A} = - \mu \textbf{v}, \qquad \mu \ge 0 \,,
\eeq
that is, with a proper dimensional choice for $\mu$, the Stokes law for the friction due to a viscous fluid (e.g. the air). From a physical point of view, the presence of three friction coefficients corresponds to having three material points made of three different materials, or even to have the pendulum located in a box with three different separate fluids. We do that in order to consider the most general perturbation to the dynamical system; a realistic model can be obtained for example by equating all the three friction coefficients.

In order to calculate the generalized force associated to the friction \eqref{friction_force}, we consider the following Rayleigh dissipation function
\beq
R := \frac{1}{2} \mu v^2 \implies \textbf{A} = - \nabla_{\textbf{v}} R \,.
\eeq
Since
\beq
\frac{\partial \textbf{v}}{\partial \dot \varphi_j} = \frac{\partial \textbf{r}}{\partial \varphi_j} \,,
\eeq
the components of the generalized force associated to $\textbf{A}$ are given by
\beq
Q_{j} = -\frac{\partial R}{\partial \dot \varphi_j}
\eeq
and the Euler-Lagrange equations become
\beq
\frac{d}{dt} \bigg( \frac{\partial L}{\partial \dot \varphi_j} \bigg) - \frac{\partial L}{\partial \varphi_j} + \frac{\partial R}{\partial \dot \varphi_j} = 0 \,.
\eeq
By recalling \eqref{v_A}, \eqref{v_B} and \eqref{v_C}, the following Rayleigh function and generalized forces are associated to the friction acting on $A$
\beq
R_A = \frac{1}{2} \mu_A l_2^2 \dot \varphi_2^2 \,,
\eeq
\begin{subequations}
\beq
Q_1^A = 0
\eeq
\beq
Q_2^A = -\mu_A l_2^2 \dot \varphi_2\,,
\eeq
\end{subequations}
to the friction acting on $B$
\beq
R_B = \frac{1}{2} \mu_B \bigg( l_2^2 \dot \varphi_2^2 + l_1^2 (\dot \varphi_1 + \dot \varphi_2)^2 - 2 l_1 l_2 \dot \varphi_2 (\dot \varphi_1 + \dot \varphi_2) \cos \varphi_1 \bigg) \,,
\eeq
\begin{subequations}
\beq
Q_1^B = - \mu_B \bigg( l_1^2 (\dot \varphi_1 + \dot \varphi_2) - l_1 l_2 \dot \varphi_2 \cos \varphi_1 \bigg)
\eeq
\beq
Q_2^B = - \mu_B \bigg( l_1^2 (\dot \varphi_1 + \dot \varphi_2) + l_2^2 \dot \varphi_2 - l_1 l_2 (\dot \varphi_1 + 2 \dot \varphi_2) \cos \varphi_1 \bigg) \,,
\eeq
\end{subequations}
and to the friction acting on $C$
\beq
R_C = \frac{1}{2} \mu_C \bigg( l_2^2 \dot \varphi_2^2 + l_3^2 (\dot \varphi_1 + \dot \varphi_2)^2 + 2 l_2 l_3 \dot \varphi_2 (\dot \varphi_1 + \dot \varphi_2) \cos \varphi_1 \bigg) \,,
\eeq
\begin{subequations}
\beq
Q_1^C = -\mu_C \bigg( l_3^2 (\dot \varphi_1 + \dot \varphi_2) + l_2 l_3 \dot \varphi_2 \cos \varphi_1 \bigg)
\eeq
\beq
Q_2^C = -\mu_C \bigg( l_3^2 (\dot \varphi_1 + \dot \varphi_2) + l_2^2 \dot \varphi_2 + l_3 l_2 (\dot \varphi_1 + 2 \dot \varphi_2) \cos \varphi_1 \bigg) \,.
\eeq
\end{subequations}
The terms $A_ {ij}$ in the equations of motion still follow the \eqref{coeff_A}, while $r_{1,2}$ earn some terms due to the presence of frictions
\begin{subequations}
\beq\begin{split}\label{friction_stokes_motion1}
r_1 =& - k_1 \varphi_1 + \Delta l_2 \dot{\varphi}_2^2 \sin \varphi_1 - (\mu_B l_1^2 + \mu_C l_3^2) (\dot \varphi_1 + \dot \varphi_2) + \\
&+ (\mu_B l_1 - \mu_C l_3) l_2 \dot \varphi_2 \cos \varphi_1 \,,
\end{split}\eeq
\beq\begin{split}\label{friction_stokes_motion2}
r_2 =& -\Delta l_2 \dot{\varphi}_1 (\dot{\varphi}_1 + 2 \dot{\varphi}_2) \sin \varphi_1 - F l_2 \sin \varphi_1 - (\mu_B l_1^2 + \mu_C l_3^2) (\dot \varphi_1 + \dot \varphi_2) + \\
&+ (\mu_B l_1 - \mu_C l_3) l_2 (\dot \varphi_1 + 2 \dot \varphi_2) \cos \varphi_1 - (\mu_A + \mu_B + \mu_C) l_2^2 \dot \varphi_2 \,.
\end{split}\eeq
\end{subequations}
We point out that the presence of a general friction does not re-introduce the variable $\varphi_2$ in the equations of motion, so that $\varphi_2$ is still cyclic, but this further non-conservative force destroys the possible Hamiltonian case we had for $F = 0$.

The symmetry of the problem is clearly visible, since the friction coefficients introduce terms analogous to $A_{11}$, $\Delta$ and $M$, with the mass $m_j$ replaced by $\mu_j$. In particular we observe that a term $\mu_B l_1 - \mu_C l_3$ arises in the parameters $r_{1,2}$, formally equal to $\Delta$. We therefore define
\beq\begin{split}
&A_{11\mu} := \mu_B l_1^2 + \mu_C l_3^2 \\
&M_\mu := \mu_A + \mu_B + \mu_C \\
&A_{22\mu} := A_{11\mu} + M_\mu l_2^2 \\
&\Delta_\mu:= \mu_B l_1 - \mu_C l_3
\end{split}\eeq
and require that $\Delta = \Delta_\mu = 0$, so that
\begin{subequations}
\beq
A_{11} = A_{12} = A_{21} = m_B l_1^2 + m_C l_3^2
\eeq
\beq
A_{22} = A_{11} + M l_2^2
\eeq
\beq
r_1 = - k_1 \varphi_1 - A_{11\mu} (\dot \varphi_1 + \dot \varphi_2)
\eeq
\beq
r_2 = - F l_2 \sin \varphi_1 - A_{11\mu} (\dot \varphi_1 + \dot \varphi_2) - M_\mu l_2^2 \dot \varphi_2 \,.
\eeq
\end{subequations}
The equations of motion become
\begin{subequations}\label{syst_friction}
\beq
A_{11} (\ddot \varphi_1 + \ddot \varphi_2) = - k_1 \varphi_1 - A_{11\mu} (\dot \varphi_1 + \dot \varphi_2)
\eeq
\beq\begin{split}
&A_{11} \ddot \varphi_1 + (A_{11} + M l_2^2) \ddot \varphi_2 = - F l_2 \sin \varphi_1 - A_{11\mu} (\dot \varphi_1 + \dot \varphi_2) - M_\mu l_2^2 \dot \varphi_2 \,.
\end{split}\eeq
\end{subequations}
If we subtract the first equation from the second one we get
\beq\label{subtract}
M l_2^2 \ddot \varphi_2 = k_1 \varphi_1 - F l_2 \sin \varphi_1 - M_\mu l_2^2 \dot \varphi_2 \,.
\eeq
If the following condition on the coefficients
\beq\label{sym_mmu}
\frac{A_{11}}{M} = \frac{A_{11\mu}}{M_\mu}
\eeq
holds, a $\varphi_2$-independent equation can be extracted from the system, by substituting the \eqref{subtract} in the first equation of \eqref{syst_friction}
\beq\label{syst_friction_sym1}
\ddot \varphi_1 = - \frac{k_1}{M l_2^2} \bigg( 1 + \frac{M l_2^2}{A_{11}} \bigg) \varphi_1 - \frac{F}{M l_2} \sin \varphi_1 - \frac{A_{11\mu}}{A_{11}} \dot \varphi_1 \,,
\eeq
that is the equation of a one-dimensional damped harmonic oscillator. We then substitute the previous one in the second equation of \eqref{syst_friction} to get
\beq\label{syst_friction_sym2}
\ddot \varphi_2 = \frac{k_1}{M l_2^2} \varphi_1 - \frac{F}{M l_2} \sin \varphi_1 - \frac{A_{22\mu}}{A_{22}} \dot \varphi_2
\eeq
and the system given by equations \eqref{syst_friction_sym1} and \eqref{syst_friction_sym2} is formally integrable.

In Figure \ref{friction_stokes_spiral} is shown the expected spiral motion, that is the system has an attractive point. In the case of breaking of the three symmetries considered above, i.e. $\Delta = \Delta_\mu = 0$ and \eqref{sym_mmu}, also limit cycles can arise, as shown in Figure \ref{friction_stokes_periodic1} and \ref{friction_stokes_periodic2}; furthermore, the approaching to the attractive point can be unusual, as shown in Figure \ref{friction_stokes_strange}.
\begin{figure}
\centering
\includegraphics[width=7cm]{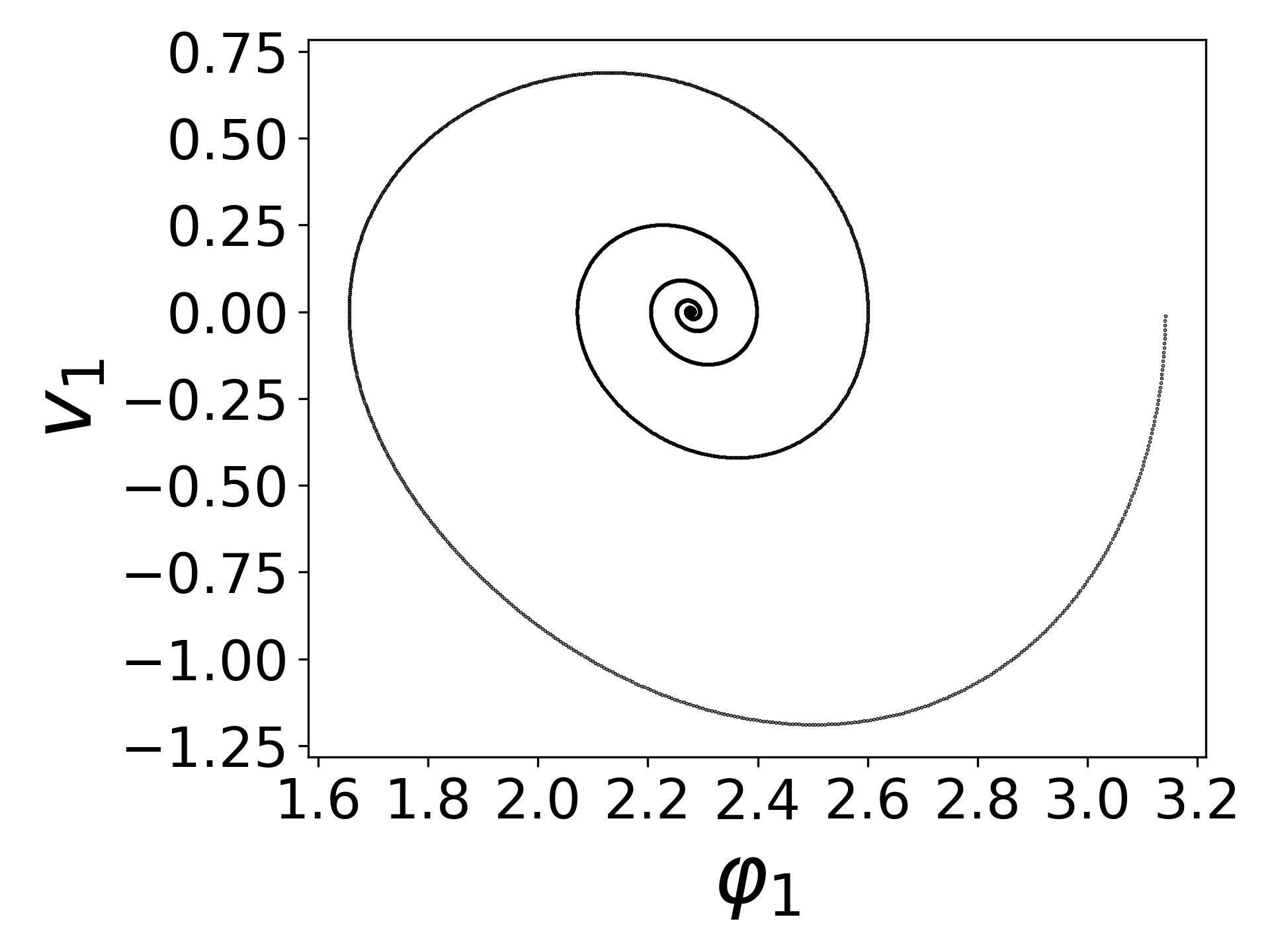}
\caption{Attractive point in presence of Stokes friction, $\Delta = 0$, $\Delta_\mu = 0$ and $\frac{A_{11}}{M} = \frac{A_{11\mu}}{M_\mu}$. Parameters: $m_A = 1$, $m_B = 2$, $m_C = 1$,  $l_1 = 1$, $l_2 = 1$, $l_3 = 2$, $k_1 = 2$, $F = 10$, $\mu_A = 0.5$, $\mu_B = 1$, $\mu_C = 0.5$. Initial conditions: $\varphi_1(0) = \pi$, $\varphi_2(0) = 0$, $v_1(0) = v_2(0) = 0$.}
\label{friction_stokes_spiral}
\end{figure}
\begin{figure}
\centering
\includegraphics[width=7cm]{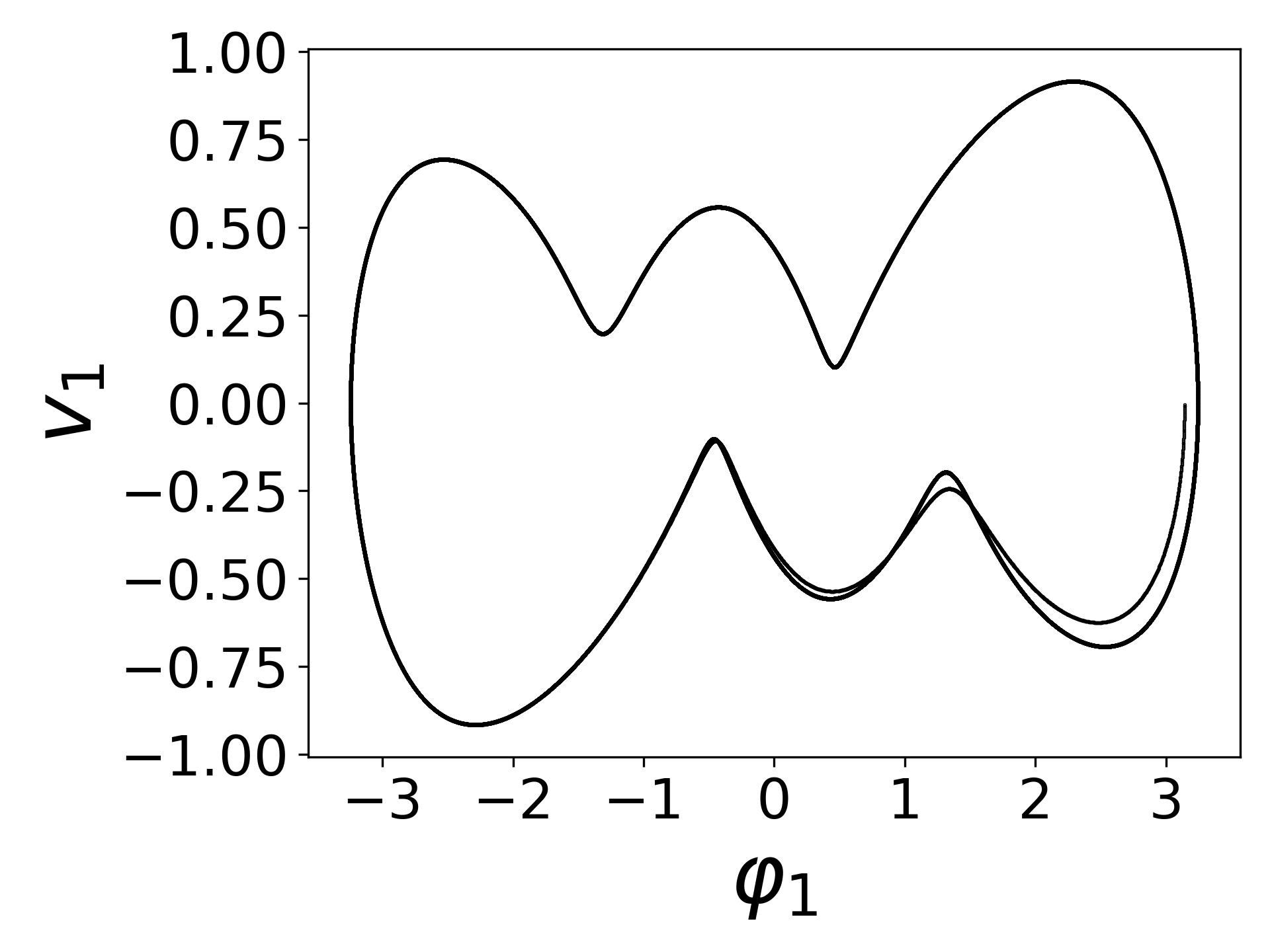}
\caption{Limit cycle in presence of Stokes friction, $\Delta \ne 0$, $\Delta_\mu \ne 0$ and $\frac{A_{11}}{M} \ne \frac{A_{11\mu}}{M_\mu}$. Parameters: $m_A = 1$, $m_B = 2.5$, $m_C = 5$,  $l_1 = 1$, $l_2 = 1$, $l_3 = 2$, $k_1 = 2$, $F = 10$, $\mu_A = 0.4$, $\mu_B = 2$, $\mu_C = 0$. Initial conditions: $\varphi_1(0) = \pi$, $\varphi_2(0) = 0$, $v_1(0) = v_2(0) = 0$.}
\label{friction_stokes_periodic1}
\end{figure}
\begin{figure}
\centering
\includegraphics[width=7cm]{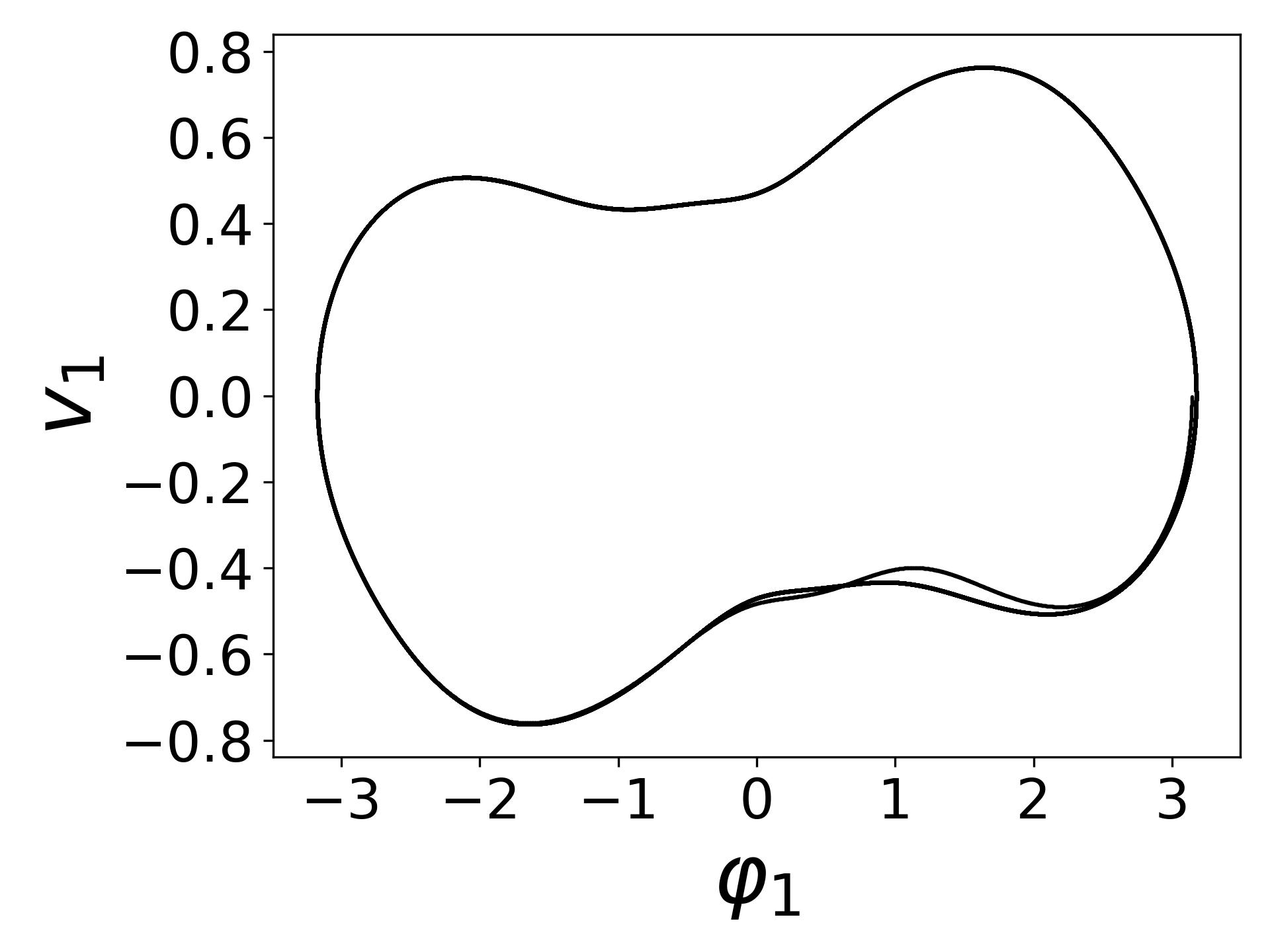}
\caption{Limit cycle in presence of Stokes friction, $\Delta \ne 0$, $\Delta_\mu \ne 0$ and $\frac{A_{11}}{M} \ne \frac{A_{11\mu}}{M_\mu}$. Parameters: $m_A = 1$, $m_B = 2.5$, $m_C = 5$,  $l_1 = 1$, $l_2 = 2$, $l_3 = 2$, $k_1 = 2$, $F = 10$, $\mu_A = 0.4$, $\mu_B = 2$, $\mu_C = 0.5$. Initial conditions: $\varphi_1(0) = \pi$, $\varphi_2(0) = 0$, $v_1(0) = v_2(0) = 0$.}
\label{friction_stokes_periodic2}
\end{figure}
\begin{figure}
\centering
\subfloat[][\label{friction_stokes_strange}]
{\includegraphics[width=.45\textwidth]{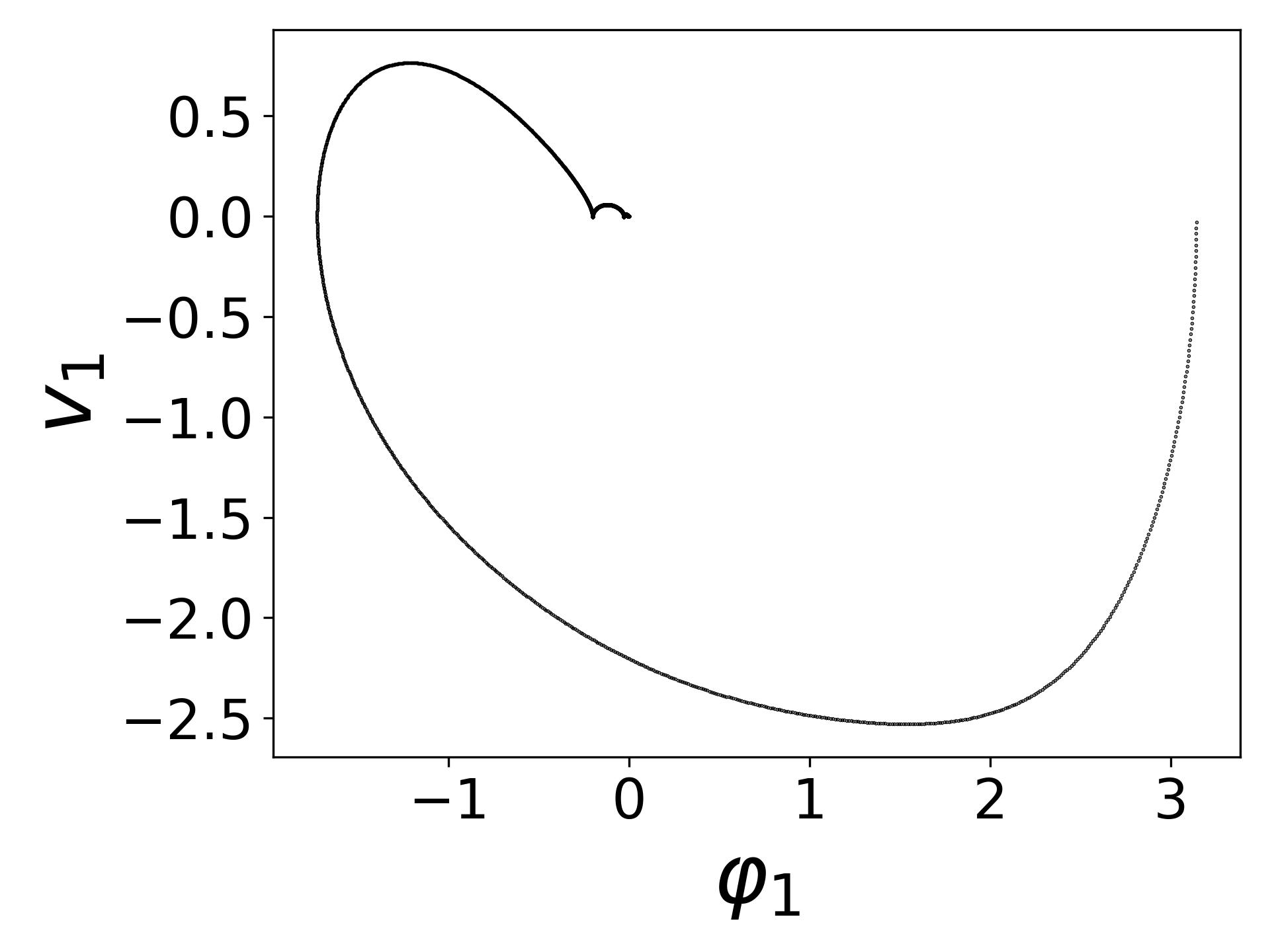}} \\
\subfloat[][\label{friction_stokes_strange_zoom1}]
{\includegraphics[width=.45\textwidth]{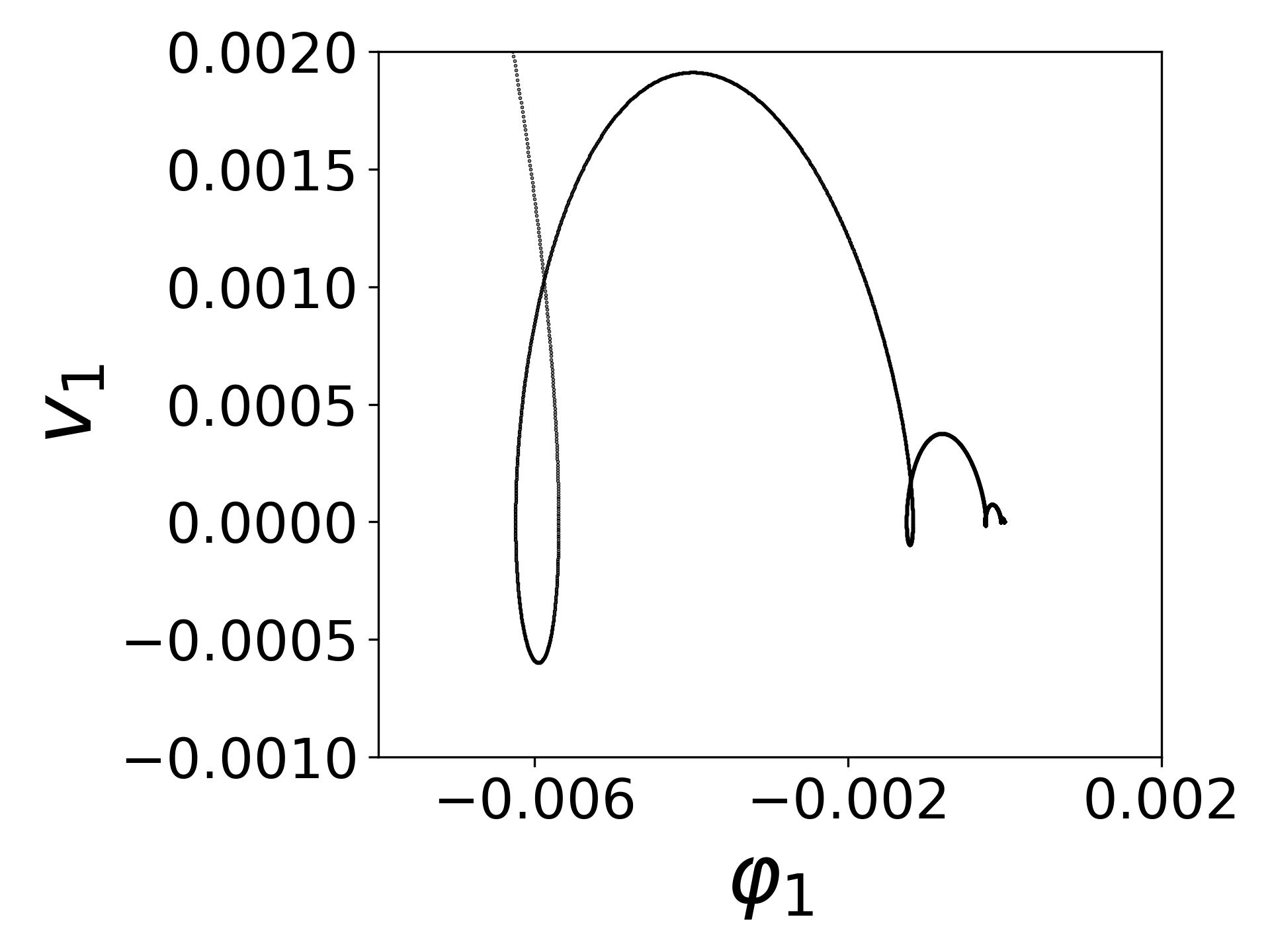}} \quad
\subfloat[][\label{friction_stokes_strange_zoom2}]
{\includegraphics[width=.45\textwidth]{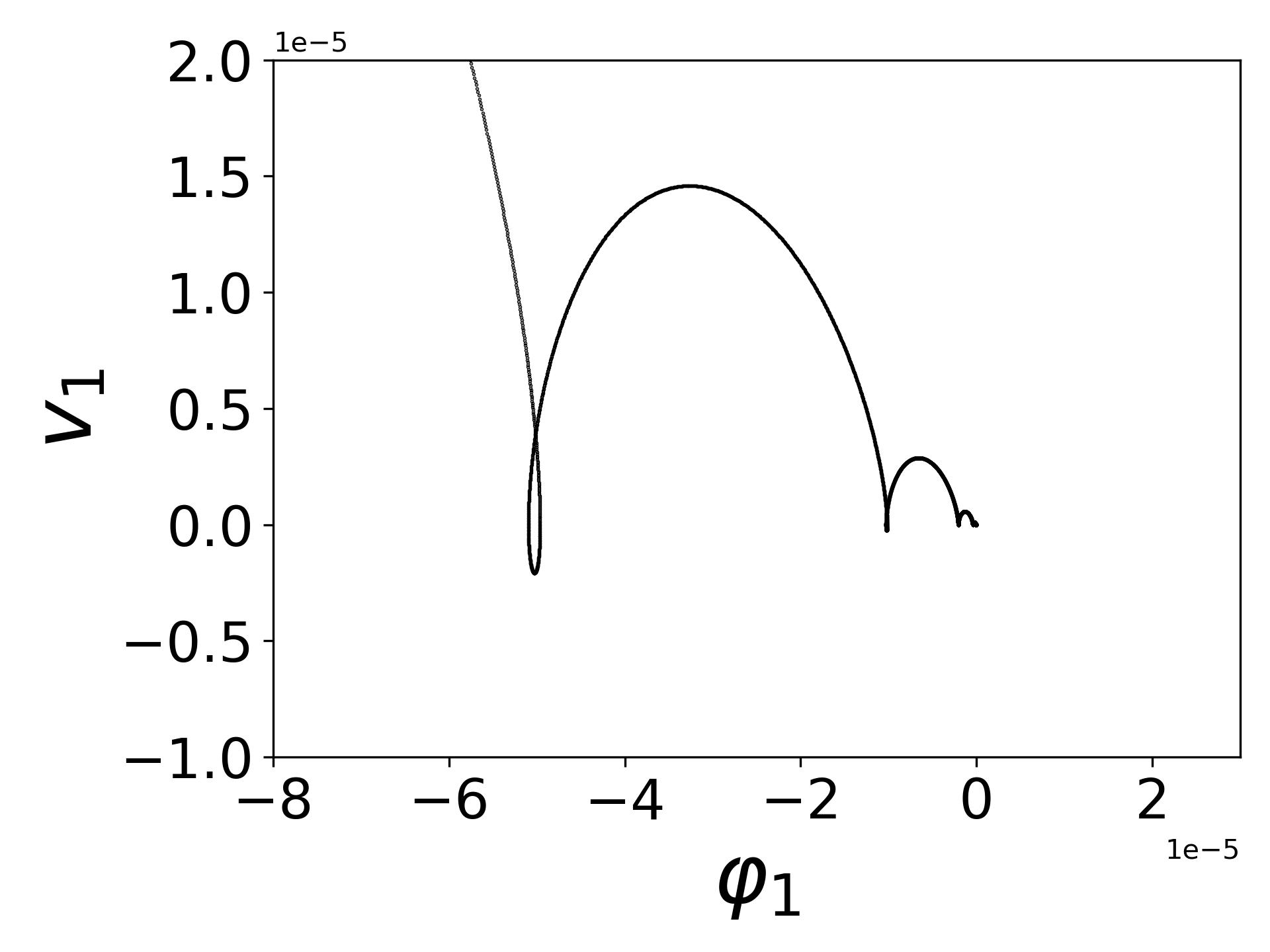}}
\caption{"Jumping" motion approaching to an attractive point in presence of Stokes friction; two zooms up to a scale of $10^{-5}$ are shown. Parameters: $m_A = 1$, $m_B = 1.5$, $m_C = 1$,  $l_1 = 1.1$, $l_2 = 3$, $l_3 = 1$, $k_1 = 4$, $F = 10$, $\mu_A = 0.5$, $\mu_B = 0.5$, $\mu_C = 0.5$. Initial conditions: $\varphi_1(0) = \pi$, $\varphi_2(0) = 0$, $v_1(0) = v_2(0) = 0$.}
\label{friction_stokes_strange}
\end{figure}

\subsection{Friction on the pins}
Let us now consider two friction forces generated on the pins $O$ and $A$ represented by
\beq
\textbf{A} = - \mu \dot \varphi \hat \varphi, \qquad \mu \ge 0
\eeq
that resuly in adding a torque opposite to the rotation of the system. The friction on the pin $O$ acts on the rod $\overline{OA}$, so it acts on the point $A$; the friction on the pin $A$ acts on the rod $\overline{BC}$, so it acts on the center of mass of the material points $B$ and $C$. In this case it is more useful to calculate the generalized forces by using the definition. \\We calculate the generalized force associated to the friction acting on the pin $O$
\beq
\textbf{A}_O = - \mu_O \dot \varphi_2 \hat \varphi_2 = - \mu_O \dot \varphi_2
\begin{pmatrix}
- \sin \varphi_2 \\ \cos \varphi_2
\end{pmatrix} \,,
\eeq
\begin{subequations}
\beq
Q_1^O = \displaystyle \textbf{A}_O \cdot \frac{\partial \textbf{r}_A}{\partial \varphi_1} = 0
\eeq
\beq
Q_2^O = \displaystyle \textbf{A}_O \cdot \frac{\partial \textbf{r}_A}{\partial \varphi_2} = \mu_O l_2 \dot \varphi_2 \cos (2 \varphi_2)
\eeq
\end{subequations}
and the generalized force associated to the friction acting on the pin $A$
\beq
\textbf{A}_A = - \mu_A \dot \varphi_1 \hat \varphi_1 = - \mu_A \dot \varphi_1
\begin{pmatrix}
- \sin (\varphi_1 + \varphi_2) \\ \cos (\varphi_1 + \varphi_2)
\end{pmatrix} \,,
\eeq
\beq
\textbf{r}_{BC} = \frac{m_B \textbf{r}_B + m_C \textbf{r}_C}{m_B + m_C} =
\begin{pmatrix}
l_2 \cos \varphi_2 - \frac{\Delta}{m_B + m_C} \cos(\varphi_1 + \varphi_2) \\ - l_2 \sin \varphi_2 + \frac{\Delta}{m_B + m_C} \sin(\varphi_1 + \varphi_2)
\end{pmatrix} \,,
\eeq
\begin{subequations}
\beq
Q_1^A = \displaystyle \textbf{A}_A \cdot \frac{\partial \textbf{r}_{BC}}{\partial \varphi_1} = - \frac{\mu_A \Delta}{m_B + m_C} \dot \varphi_1 \cos( 2 \varphi_1 + 2 \varphi_2 ) \\
\eeq
\beq
Q_2^A = \displaystyle \textbf{A}_A \cdot \frac{\partial \textbf{r}_{BC}}{\partial \varphi_2} = - \mu_A \dot \varphi_1 \bigg( -l_2 \cos (\varphi_1 + 2 \varphi_2) + \frac{\Delta}{m_B + m_C} \cos( 2 \varphi_1 + 2 \varphi_2) \bigg) \,.
\eeq
\end{subequations}
The equations of motion change as follows: the terms $A_{ij}$ still follow the \eqref{coeff_A}, while $r_{1,2}$ earn some terms due to the presence of frictions
\begin{subequations}
\beq
r_1 = - k_1 \varphi_1 + \Delta l_2 \dot \varphi_2^2 \sin \varphi_1 - \frac{\mu_A \Delta}{m_B + m_C} \dot \varphi_1 \cos( 2 \varphi_1 + 2 \varphi_2 )
\eeq
\beq\begin{split}
r_2 =& - \Delta l_2 \dot \varphi_1 (\dot \varphi_1 + 2 \dot \varphi_2) \sin \varphi_1 - F l_2 \sin \varphi_1 + \mu_O l_2 \dot \varphi_2 \cos (2 \varphi_2) + \\
&+ \mu_A l_2 \dot \varphi_1 \cos (\varphi_1 + 2 \varphi_2) - \frac{\mu_A \Delta}{m_B + m_C} \dot \varphi_1 \cos( 2 \varphi_1 + 2 \varphi_2) \,.
\end{split}\eeq
\end{subequations}
In this case there's no way to obtain the previous integrable cases, so we may expect that the system shows a chaotic behavior, with the holding or the breaking of the known symmetries.

Anyway it could be interesting to analyze the behavior of the system under a progressive breaking of the symmetry $\Delta = 0$; we will do that by holding fixed the initial conditions and all the parameters except for $m_B$, that will be progressively increased. As shown in Figure \ref{friction_pins_delta0}, with the holding of the symmetry the motion starts from the initial point and seems trying to build a limit cycle that expands more and more, up to generate irregular motion. A first small breaking of the symmetry, shown in Figure \ref{friction_pins_delta01}, gives rise to a very different behavior; in particular the chaotic motion accumulates on pseudo-spherical shapes, symmetric with respect to the axis $\varphi_1 = 0$. This behavior is mixed with the previous one under a further increasing of $\Delta$, as shown in Figure \ref{friction_pins_delta02}. Notice that this pseudo-spherical shape has been already found in \cite{Polekhin}, for a general non-integrable case of the system. By increasing $\Delta$, as shown in Figure \ref{friction_pins_delta05}, the system seems again trying to build limit cycles that expand as the time increases, but in a more complex way, while the oscillations of the motion are now more visible. A further increase of $\Delta$ gives rise again to a mixed behavior, shown in Figure \ref{friction_pins_delta1}; we have now a pseudo-cyclic limit interrupted by the presence of two new chaotic orbits, similar to the previous ones.
\begin{figure}[H]
\centering
\subfloat[][\emph{$\Delta = m_B - 1 = 0$.} \label{friction_pins_delta0}]
{\includegraphics[width=.45\textwidth]{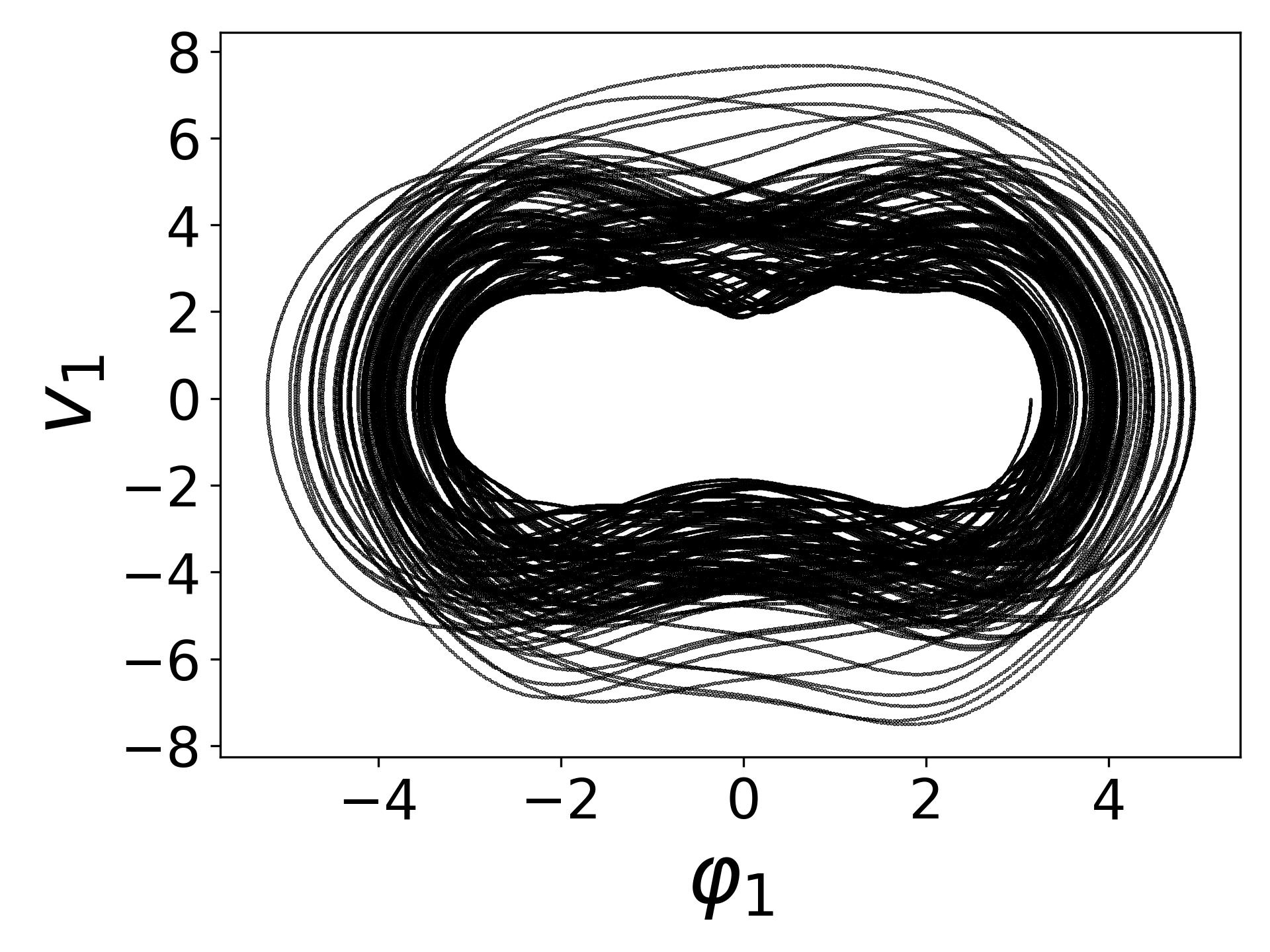}} \quad
\subfloat[][\emph{$\Delta = m_B - 1 = 0.1$.} \label{friction_pins_delta01}]
{\includegraphics[width=.45\textwidth]{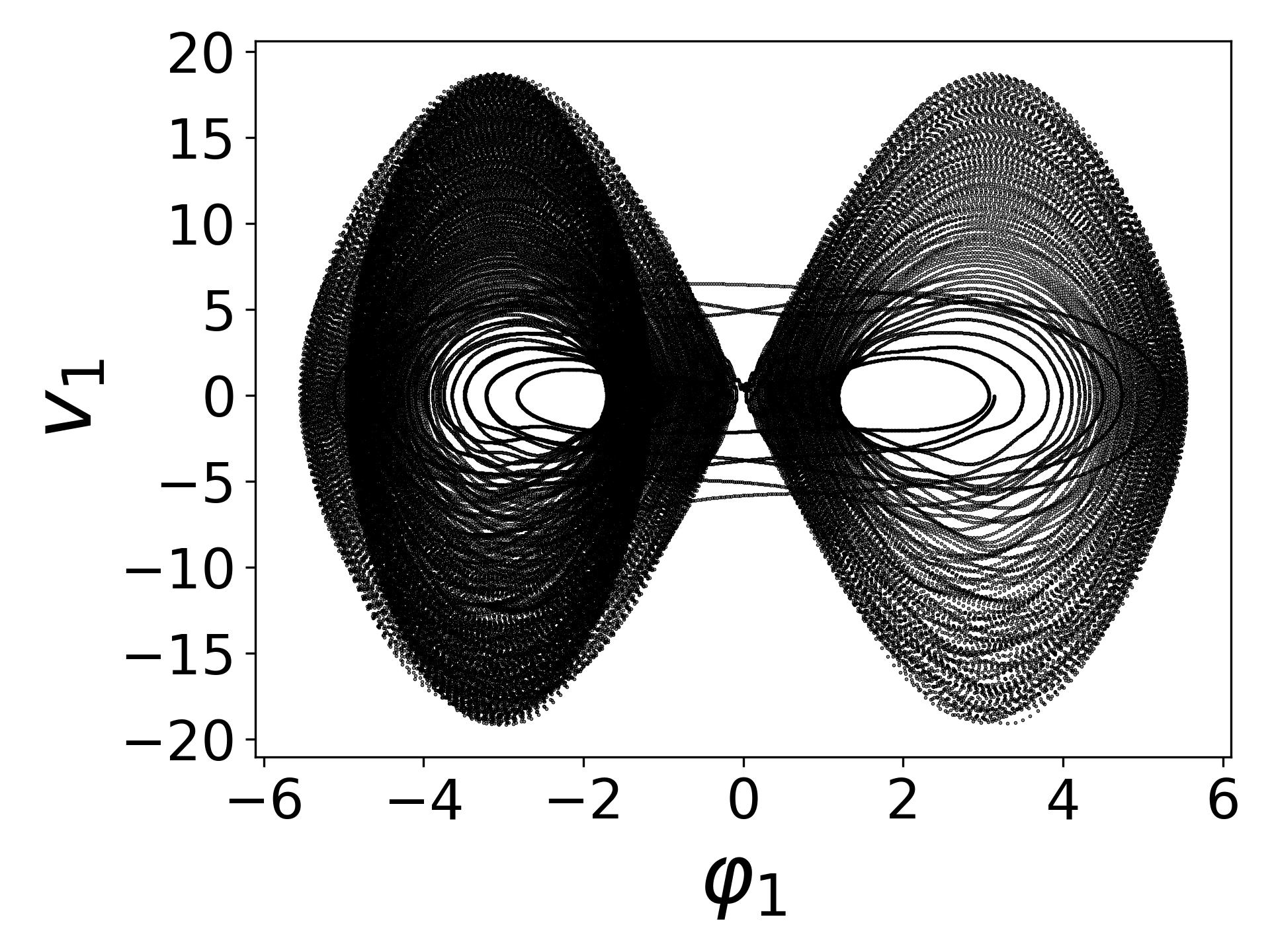}} \\
\subfloat[][\emph{$\Delta = m_B - 1 = 0.2$.} \label{friction_pins_delta02}]
{\includegraphics[width=.45\textwidth]{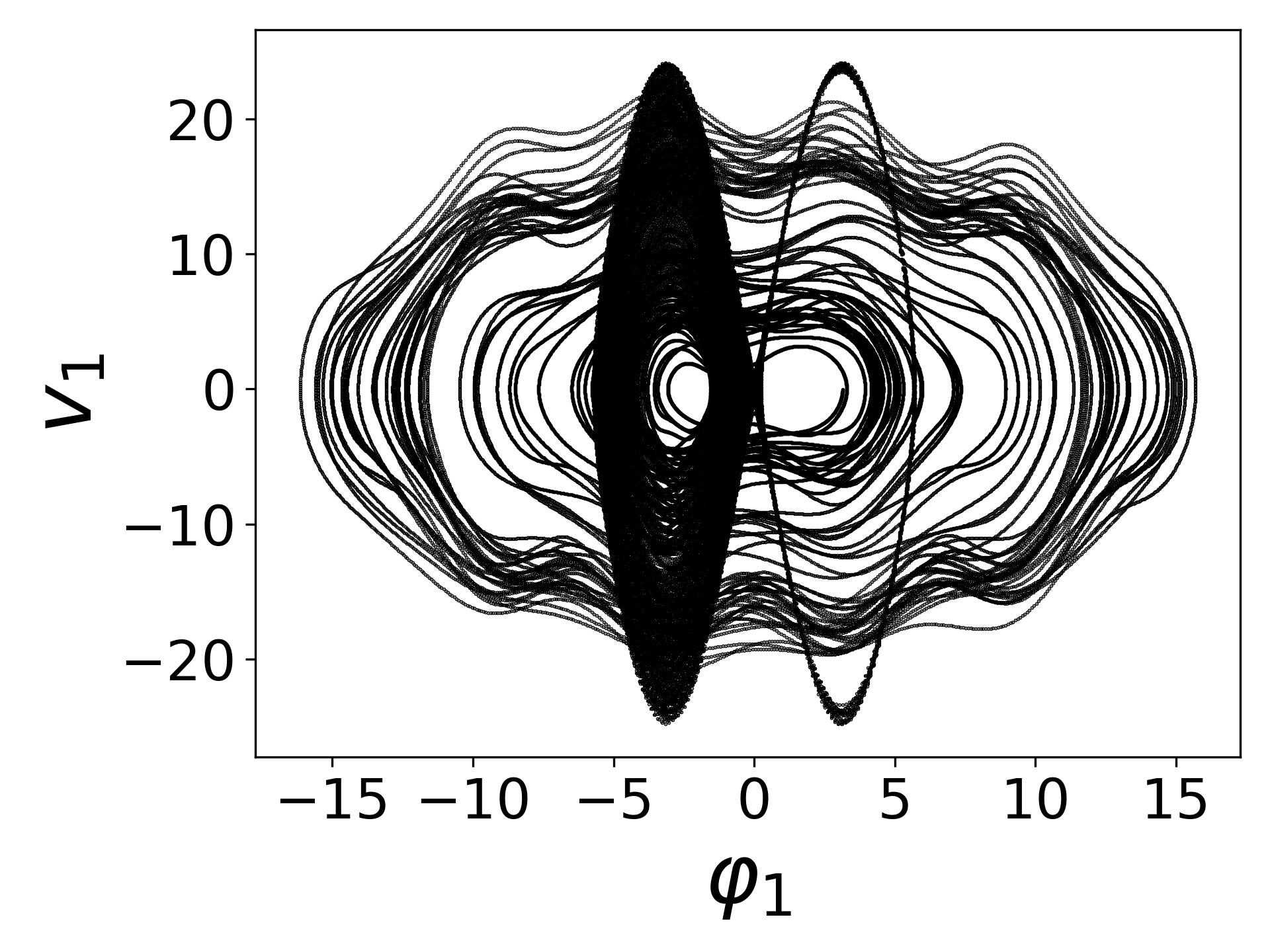}} \quad
\subfloat[][\emph{$\Delta = m_B - 1 = 0.5$.} \label{friction_pins_delta05}]
{\includegraphics[width=.45\textwidth]{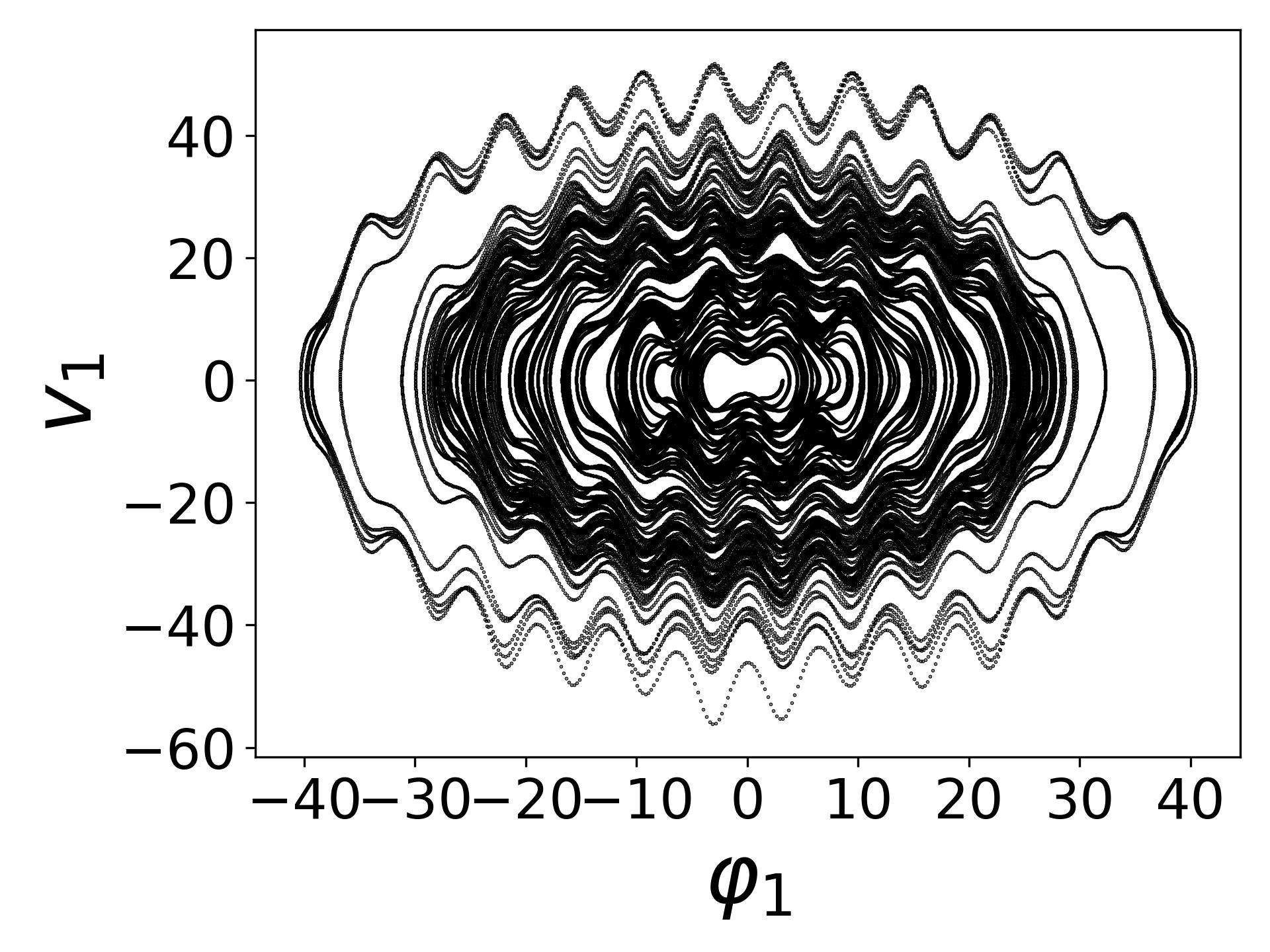}} \\
\subfloat[][\emph{$\Delta = m_B - 1 = 1$.} \label{friction_pins_delta1}]
{\includegraphics[width=.45\textwidth]{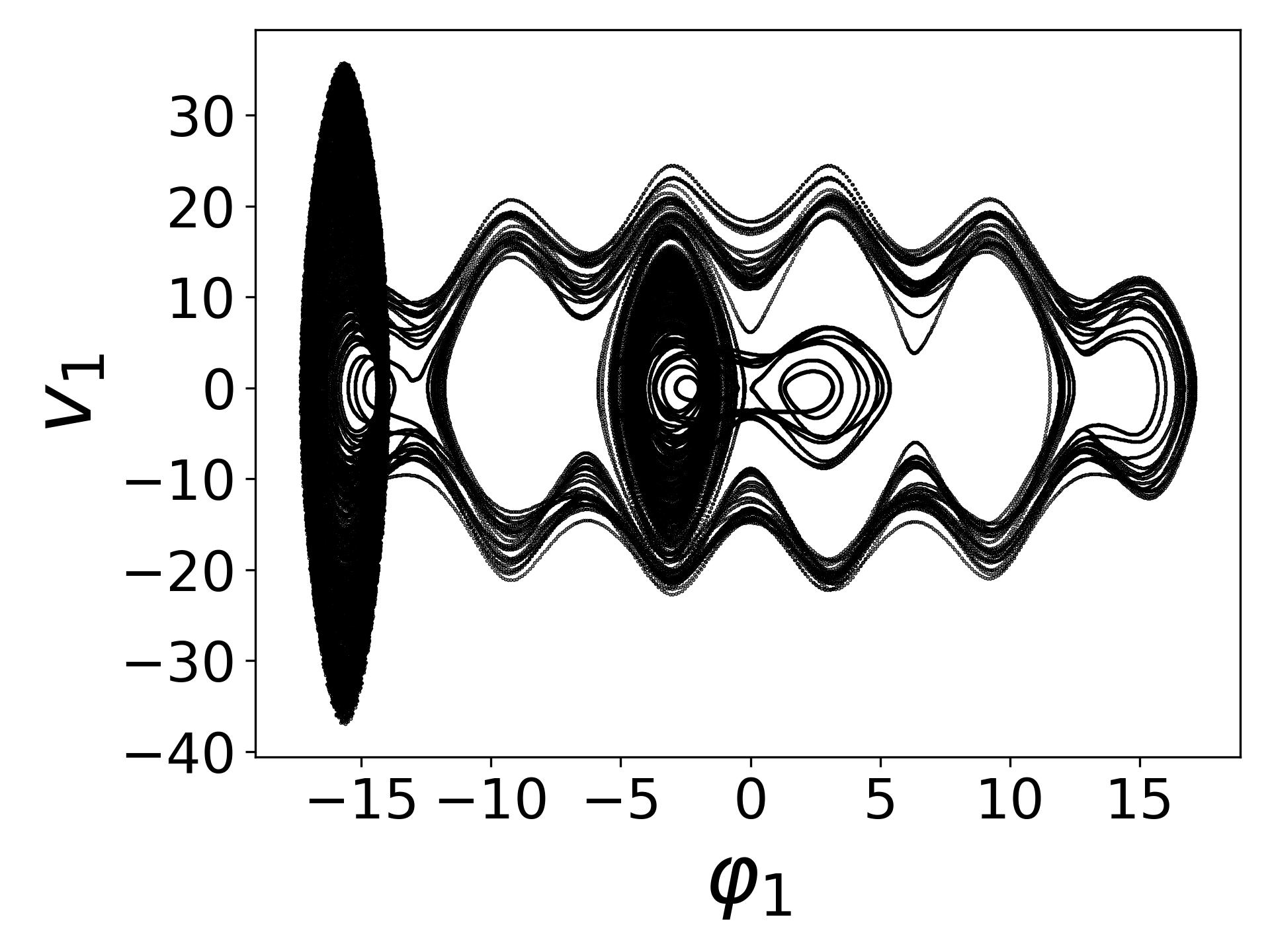}}
\caption{Progressive breaking of the symmetry $\Delta = 0$. Parameters: $m_A = 1$, $m_C = 1$,  $l_1 = 1$, $l_2 = 1$, $l_3 = 1$, $k_1 = 2$, $F = 10$, $\mu_O = 1$, $\mu_A = 2$. Initial conditions: $\varphi_1(0) = \pi$, $\varphi_2(0) = 0$, $v_1(0) = v_2(0) = 0$.}
\label{friction_pins_prog_delta}
\end{figure}
The simulations shown in Figure \ref{friction_pins_prog_delta} suggest that, for relatively small breaking of the symmetry $\Delta = 0$, the system mixes two different behaviors: a phase in which it tries to build limit cycles that expand as the time increase, and a phase in which the completely chaotic motion is dominant and it accumulates on a well defined pseudo-spherical shape. This behavior is confirmed even if we increase $\Delta$ (i.e. $m_B$ for our considerations) up to relatively large values; the system keeps to mix the two previous behaviors, with the chaotic orbit that moves along the $\varphi_1$-axis and the possible rising of a second smaller chaotic orbit, even if this progressively leads to a deformation of the pseudo-spherical shape.

A further suggested interpretation is the following: from Figure \ref{friction_pins_delta01} and \ref{friction_pins_delta02}, we see that the motion accumulates on a symmetrical shape, by going from one side to the other (for a sufficiently large number of iterations we would see that in Figure \ref{friction_pins_delta02} the right side will be filled). This back and forth motion reminds the Lorenz strange attractor \cite{Lorenz}, and the fact that this finite portion of the phase space is recurrent for several choices of the parameters could suggest the presence of a strange attractor for the Ziegler pendulum; the appearence of this attractor is actually different from the usual one, since the motion is not totally accumulated on it, as is (for instance) for the Lorenz attractor.

Let us now briefly look for an estimate of the threshold values of the friction coefficients for the transition to chaos. If we set $F = 0$ and $\mu_O = \mu_A = 0$, we of course obtain the integrable Hamiltonian case studied in \cite{Polekhin} and the related periodic orbits. By keeping $F = 0$, as shown in Figure \ref{threshold_mu01}, a small value for $\mu_O$ and $\mu_A$ is sufficient to transform the periodic orbit in a quasi-periodic motion, then the system passes through several phases, more or less chaotic; in particular, as $\mu_O$ and $\mu_A$ increase, the initial closed orbit  is split in several and more complex semi-orbits, up to the fast rising of an effective chaotic motion. The value $\mu_O = \mu_A = 0.59$ can be seen as a threshold value for the system, in the sense that it distinguishes between an approximately regular motion (Figure \ref{threshold_mu01} - \ref{threshold_mu05}) and an effective chaotic motion (Figure \ref{threshold_mu0591} - \ref{threshold_mu06}). Furthermore, a comparison between the Lyapunov exponents associated to the different choices of parameters suggests to interpret this value as related to a maximal chaotic orbit; indeed a Lyapunov exponent $\lambda_{\varphi_1} \sim 0.25$ is associated to the case $\mu_O = \mu_A = 0.59$, while a smaller value is related to lesser and greater friction coefficients.
\begin{figure}[H]
\centering
\subfloat[][\emph{$\mu_O = \mu_A = 0.1$.} \label{threshold_mu01}]
{\includegraphics[width=.45\textwidth]{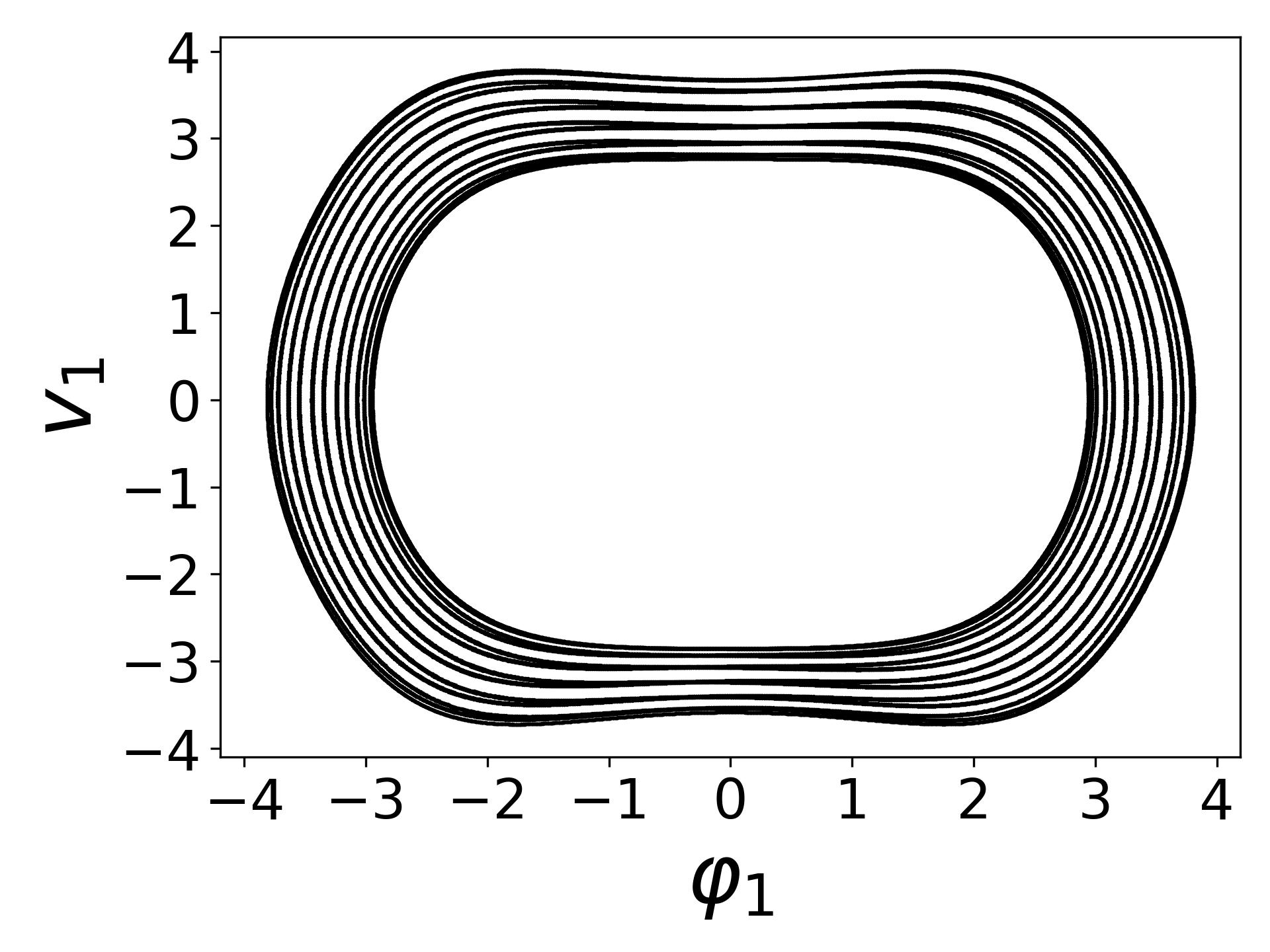}} \quad
\subfloat[][\emph{$\mu_O = \mu_A = 0.5$.} \label{threshold_mu05}]
{\includegraphics[width=.45\textwidth]{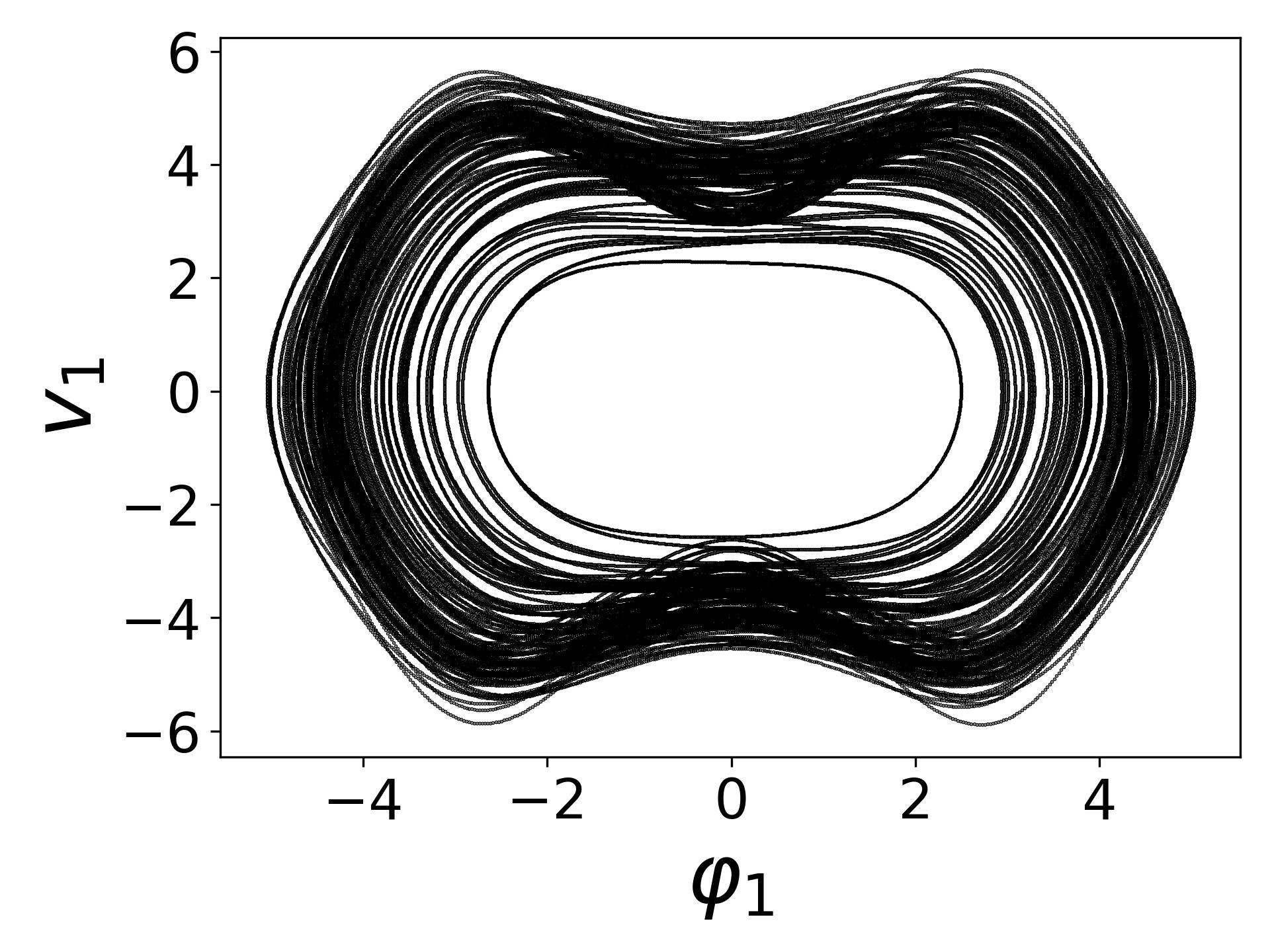}} \\
\subfloat[][\emph{$\mu_O = \mu_A = 0.59$.} \label{threshold_mu059}]
{\includegraphics[width=.45\textwidth]{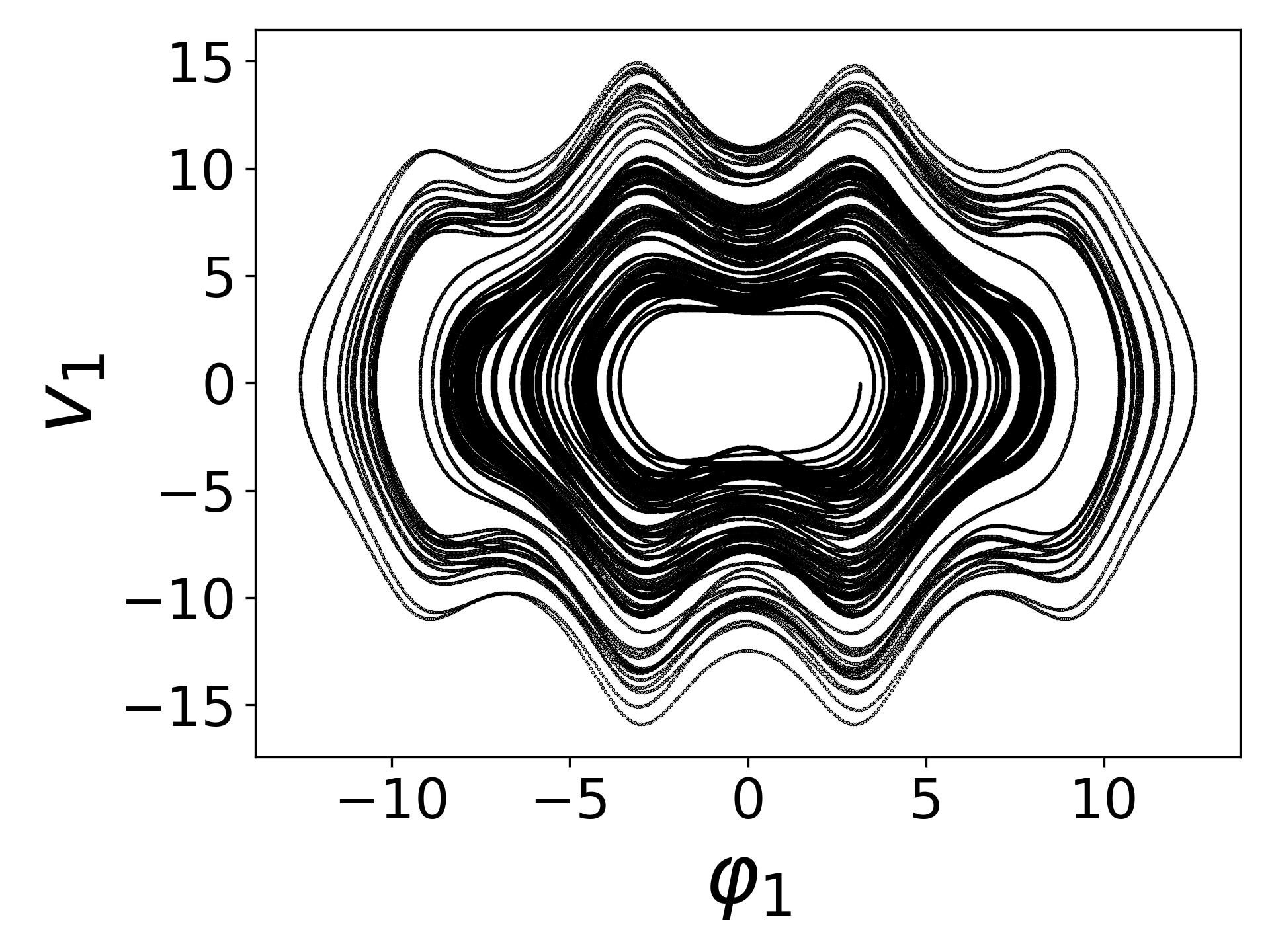}} \quad
\subfloat[][\emph{$\mu_O = \mu_A = 0.591$.} \label{threshold_mu0591}]
{\includegraphics[width=.45\textwidth]{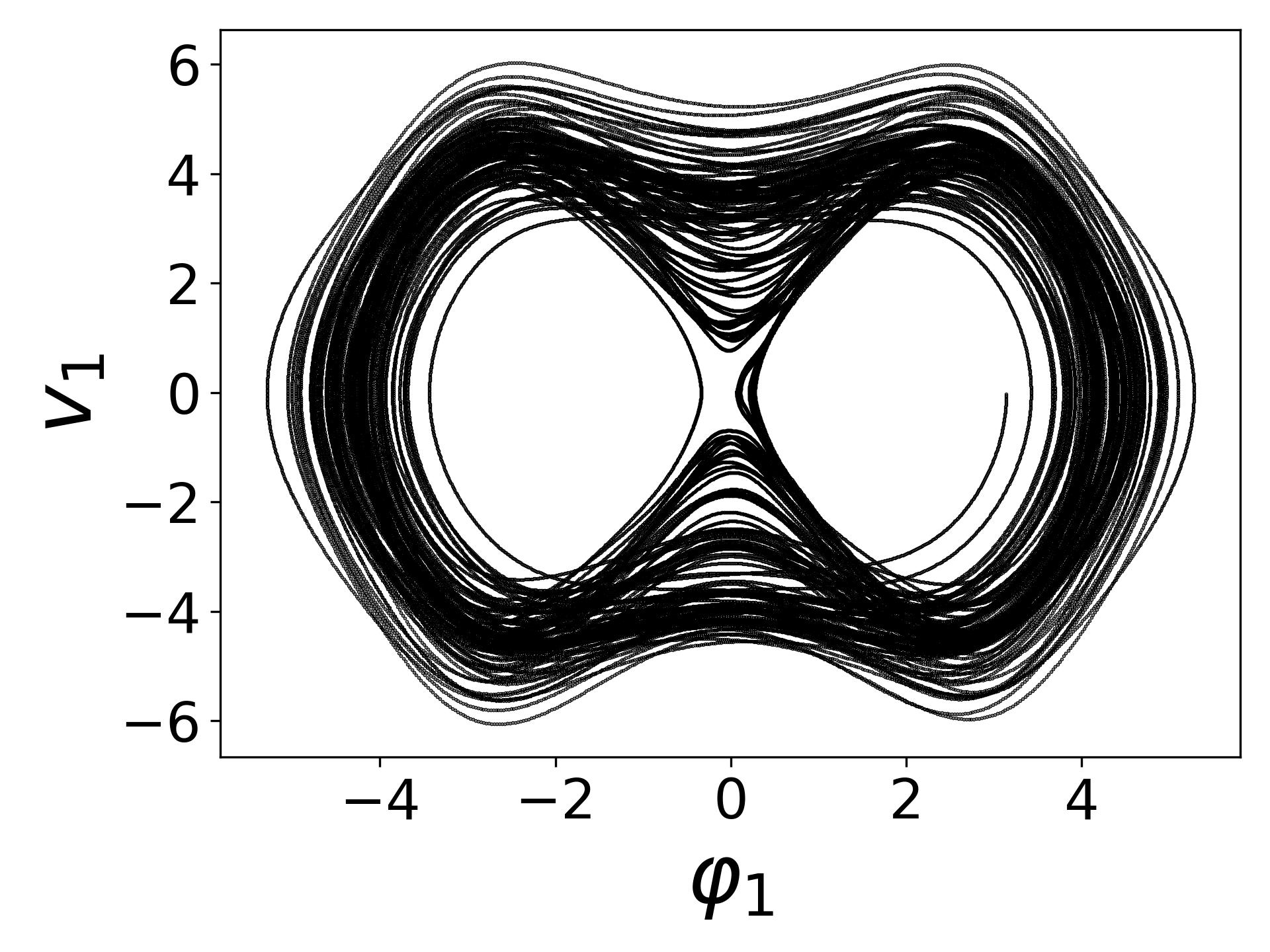}} \\
\subfloat[][\emph{$\mu_O = \mu_A = 0.595$.} \label{threshold_mu0595}]
{\includegraphics[width=.45\textwidth]{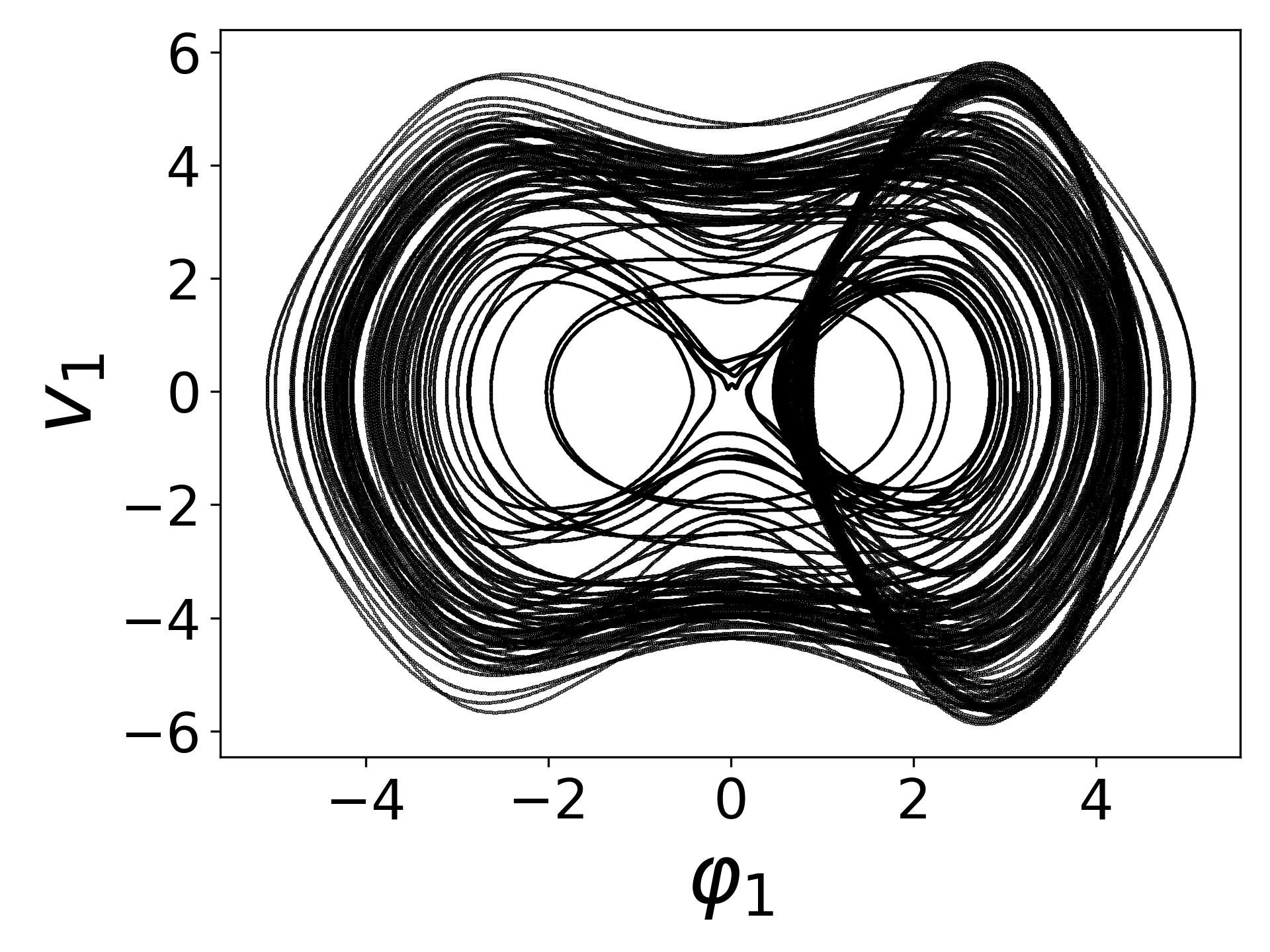}} \quad
\subfloat[][\emph{$\mu_O = \mu_A = 0.6$.} \label{threshold_mu06}]
{\includegraphics[width=.45\textwidth]{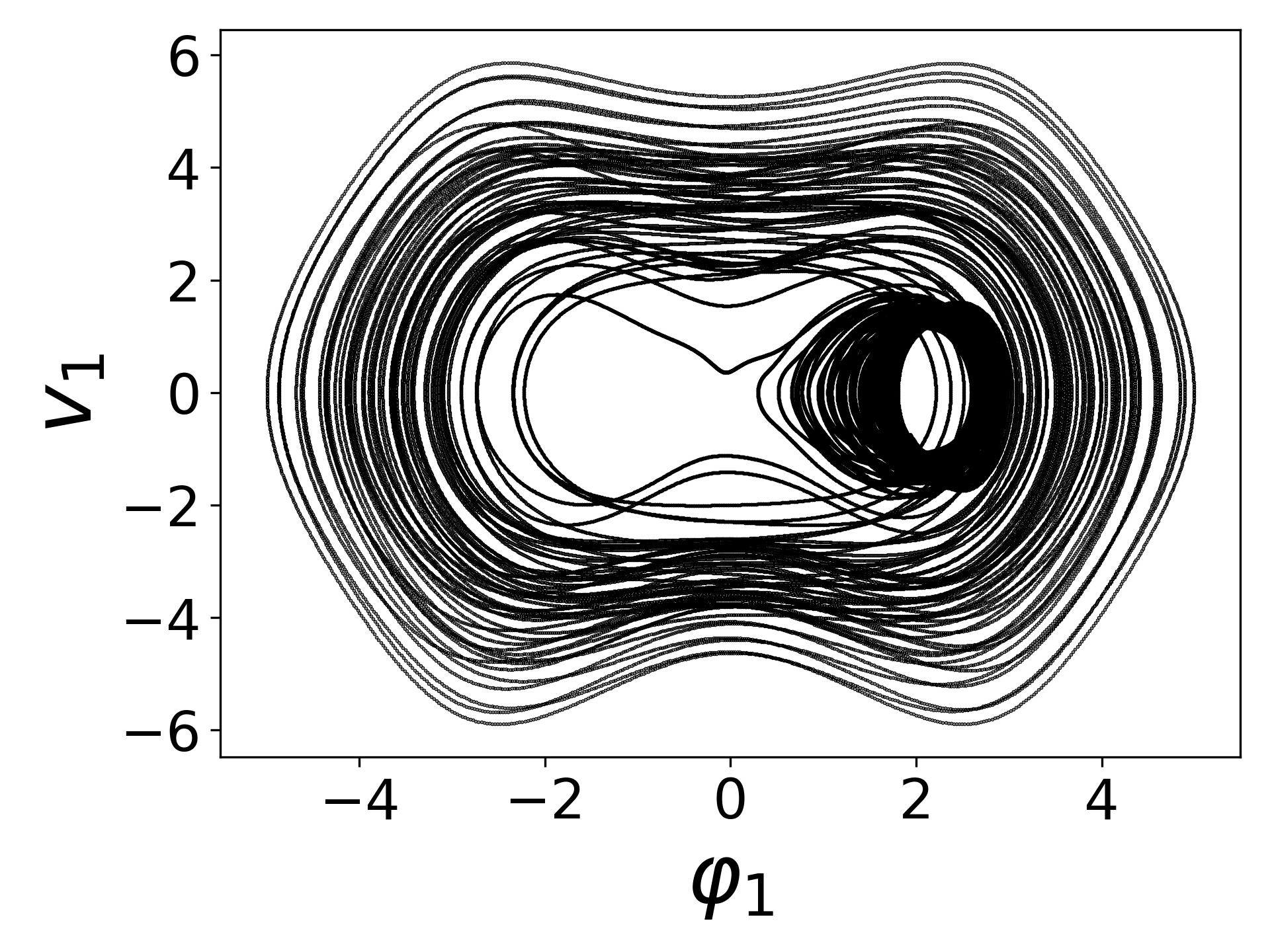}}
\caption{Progressive rising of friction on the pins. Parameters: $m_A = 1$, $m_B = 2$, $m_C = 1$,  $l_1 = 1$, $l_2 = 1$, $l_3 = 1$, $k_1 = 2$, $F = 0$. Initial conditions: $\varphi_1(0) = \pi$, $\varphi_2(0) = 0$, $v_1(0) = v_2(0) = 0$.}
\label{friction_pins_prog_mu}
\end{figure}

\section{A discrete version}\label{sec_discrete}
In this section we consider a discrete version of the equations of motion \eqref{motion4}. For simplicity, we rename the canonical variables as $(\varphi_1, p_1, \varphi_2, p_2) \mapsto (x, y, z, \omega)$ and the parameters as
\begin{subequations}
\beq
A_{11} \mapsto a
\eeq
\beq
A_{12} \mapsto a - \Delta \cos x
\eeq
\beq
A_{22} \mapsto a + b - 2 \Delta \cos x
\eeq
\beq
r_1 \mapsto -k_1 x + \Delta \omega^2 \sin x
\eeq
\beq
r_1 \mapsto -k_2 y - \Delta y (y + 2 \omega) \sin x - c \sin x \,,
\eeq
\end{subequations}
with $a > 0$, $b > 0$; in this way we preserve the physical meaning of the equations. A discrete version of the system \eqref{motion4} can be obtained by simply replacing the derivatives with respect to time with the $(n+1)$-th term in the sequences, that is we define a map
\beq\begin{split}
&f \colon \mathbb{R}^4 \to \mathbb{R}^4 \\
&x_{n+1} = f(x_n), \quad n \in \mathbb{N}
\end{split}\eeq
such that
\begin{subequations}\label{motion4_discrete}
\beq
x_{n+1} = y_n
\eeq
\beq\begin{split}
y_{n+1} =& -(a+b) k_1 x_n + a k_2 z_n +[ (a+b) \Delta \omega_n^2 + a \Delta y_n (y_n + 2 \omega_n) + ac ] \sin x_n + \\
&+ (2 k_1 x_n - k_2 z_n) \Delta \cos x_n - \\
&- [2 \Delta \omega_n^2 + \Delta y_n (y_n + 2 \omega_n) + c] \Delta \sin x_n \cos x_n
\end{split}\eeq
\beq
z_{n+1} = \omega_n
\eeq
\beq\begin{split}
\omega_{n+1} =& a k_1 x_n - a k_2 z_n -[ a \Delta \omega_n^2 + a \Delta y_n (y_n + 2 \omega_n) + ac ] \sin x_n - \\
&- k_1 \Delta x_n \cos x_n + \Delta^2 \omega_n^2 \sin x_n \cos x_n \,,
\end{split}\eeq
\end{subequations}
where all the variables of the system are defined on the entire real set. The constraint $\Delta = 0$, that represents a non-Hamiltonian integrable case for the standard Ziegler pendulum, is again a symmetry that simplifies the system; in this case the equations \eqref{motion4_discrete} become
\begin{subequations}\label{motion4_discrete_delta0}
\beq
x_{n+1} = y_n
\eeq
\beq
y_{n+1} = \frac{1}{ab} [ -(a+b) k_1 x_n + a k_2 z_n + ac \sin x_n ]
\eeq
\beq
z_{n+1} = \omega_n
\eeq
\beq
\omega_{n+1} = \frac{1}{ab} [ a k_1 x_n - a k_2 z_n - ac \sin x_n ]
\eeq
\end{subequations}
and they are well defined $\forall (x, y, z, \omega) \in \mathbb{R}^4$, since we imposed $a > 0, b >0$ from the beginning. We now look for fixed and periodic points of the map \eqref{motion4_discrete_delta0} in the attempt to prove that the discrete map associated to the Ziegler pendulum is chaotic in the sense of Devaney, i.e. it is topologically transitive and has a dense set of periodic points (\cite{Devaney}, \cite{Banks}).

Before going through the calculations, we compute the Jacobian associated to the system \eqref{motion4_discrete_delta0}
\beq
J =
\begin{pmatrix}
0 & 1 & 0 & 0 \\
-\frac{a+b}{ab} k_1 + \frac{c}{b} \cos x_n & 0 & \frac{k_2}{b} & 0 \\
0 & 0 & 0 & 1 \\
\frac{k_1}{b} - \frac{c}{b} \cos x_n & 0 & -\frac{k_2}{b} & 0 \\
\end{pmatrix}
\eeq
that has determinant
\beq
|J| = \frac{k_1 k_2}{ab} \,,
\eeq
so it is contractive if $k_1 k_2 < ab$, conservative if $k_1 k_2 = ab$ and expansive if $k_1 k_2 > ab$. The eigenvalues of $J$ satisfy
\beq\label{eigenvalues}
\lambda^4 + \lambda^2 \bigg( \frac{k_2}{b} + \frac{a+b}{ab} k_1 - \frac{c}{b} \cos x_n \bigg) = \frac{k_2}{b} \bigg( - \frac{a+b}{ab} k_1 + \frac{k_1}{b} \bigg) \,.
\eeq
In the sequel the variable written without index is automatically taken as the $n$-th term of the sequence, that is $x_n =: x$, $y_n =: y$, etc.

\subsection{Fixed points}
We look for the fixed points of the map \eqref{motion4_discrete_delta0}, that is $x_{n+1} = x_n$, etc. A trivial one is $(0, 0, 0, 0)$, while for a generic fixed point we have
\begin{subequations}\label{fixed_eq}
\beq
y = x
\eeq
\beq
-(a+b) k_1 x + a k_2 z + ac \sin x = ab y
\eeq
\beq
\omega = z
\eeq
\beq
a k_1 x - a k_2 z - ac \sin x = ab \omega \,.
\eeq
\end{subequations}
\\
\begin{prop}\label{prop_density}
The set of fixed points of the map \eqref{motion4_discrete_delta0} is not dense in $\mathbb{R}^4$.
\end{prop}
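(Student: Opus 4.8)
The plan is to show that every fixed point is confined to a proper affine subspace of $\mathbb{R}^4$ — in fact to a single line — so that density fails for purely dimensional reasons; no delicate estimate enters.

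First I would exploit the ``derivative-replacement'' structure of the map. The first and third components of \eqref{motion4_discrete_delta0}, namely $x_{n+1}=y_n$ and $z_{n+1}=\omega_n$, force $y=x$ and $\omega=z$ at any fixed point, so $\mathrm{Fix}(f)$ is already contained in the $2$-plane $P=\{(x,x,z,z):x,z\in\mathbb{R}\}\subset\mathbb{R}^4$. This alone would suffice, but I would extract more from \eqref{fixed_eq}: adding its second and fourth equations cancels the $\sin x$ and the $k_2 z$ contributions and leaves the linear relation $z=-\tfrac{a+k_1}{a}\,x$. Hence $\mathrm{Fix}(f)$ lies on the line $\ell=\bigl\{\bigl(x,\,x,\,-\tfrac{a+k_1}{a}x,\,-\tfrac{a+k_1}{a}x\bigr):x\in\mathbb{R}\bigr\}$.

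Next I would substitute this relation into the second equation of \eqref{fixed_eq} (using $y=x$) to obtain $ac\sin x=\beta x$ with $\beta:=ab+(a+b)k_1+k_2(a+k_1)$; since $a,b>0$ and the elastic constants obey $k_1,k_2\ge 0$, one has $\beta\ge ab>0$. If $c\neq 0$ this becomes $\sin x=(\beta/ac)\,x$, a line of nonzero slope crossing the graph of $\sin$, so necessarily $|x|\le|ac|/\beta$ and the real-analytic, not identically vanishing function $x\mapsto\sin x-(\beta/ac)x$ has only finitely many zeros in that interval; if $c=0$ the equation collapses to $\beta x=0$, i.e. $x=0$ and $\mathrm{Fix}(f)=\{(0,0,0,0)\}$. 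In every case $\mathrm{Fix}(f)$ is a finite subset of the line $\ell$.

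Finally I would conclude: a finite set — and, a fortiori, any subset of the proper affine subspace $P$ — is closed with empty interior in $\mathbb{R}^4$, hence nowhere dense, so in particular $\mathrm{Fix}(f)$ is not dense. The only points needing a remark are the degenerate case $c=0$, handled above, and the elementary observation that a lower-dimensional affine subspace of $\mathbb{R}^4$ cannot be dense; I do not expect a genuine obstacle here.
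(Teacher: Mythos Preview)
Your argument is correct and follows the same route as the paper: from $y=x$, $\omega=z$ and the sum of the two remaining fixed-point equations you obtain $z=-\tfrac{a+k_1}{a}x$, substitute to reach the transcendental equation $ac\sin x=\beta x$ with $\beta=ab+(a+b)k_1+k_2(a+k_1)>0$, and conclude that the fixed-point set is finite, hence not dense. Your treatment is in fact a bit tidier than the paper's --- you handle the degenerate case $c=0$ explicitly and you observe that containment in the $2$-plane $P$ (or the line $\ell$) would already suffice by a dimension count, a shortcut the paper does not make.
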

\begin{proof}
The fixed points of the map \eqref{motion4_discrete_delta0} solve the \eqref{fixed_eq}, that is
\begin{subequations}
\beq\label{period1a}
-(a+b) k_1 x + a k_2 z + ac \sin x = ab x
\eeq
\beq\label{period1b}
a k_1 x - a k_2 z - ac \sin x = ab z \,.
\eeq
\end{subequations}
By summing the \eqref{period1a} and the \eqref{period1b} we obtain
\beq
z = - \frac{k_1 + a}{a} x \,,
\eeq
that replaced in the original system returns
\beq\label{xsinx}
\sin x = \frac{[b (k_1 + a) + a k_1 + k_2 (k_1 + a) ]}{ac} x =: \frac{\alpha}{c} x \,,
\eeq
where $\alpha > 0$; so a generic fixed points of the map \eqref{motion4_discrete_delta0} is given by
\beq\label{fixed}
(x_0, y_0, z_0, \omega_0) = \tilde x(\alpha) ( 1, 1,  \beta, \beta ) \,,
\eeq
where $\tilde x(\alpha)$ is one solution of the \eqref{xsinx} for a given value of $\alpha$ and $\beta := -\frac{k_1 + a}{a} < 0$. \\If we restrict ourselves on the interval $x \in [-\pi, \pi]$, the \eqref{xsinx} has the only solution $x = 0$ if $\alpha \ge |c|$, while it admits the solution $x = 0$ and two non null symmetric solutions if $\alpha < |c|$, that is
\beq\label{constraint_c}
b (k_1 + a) + a k_1 + k_2 (k_1 + a) < |c| \,.
\eeq
By extending the domain to $x \in [-n \pi, n \pi]$ with $n \in \mathbb{N}$, we have that for a sufficiently small value of $\alpha$  the \eqref{xsinx} admits an odd number of solutions, more and more large as $\alpha$ is close to zero; indeed $x = 0$ is always a solution of \eqref{xsinx}, while if $x \ne 0$ is a solution, then $-x$ is a solution too, thanks to the odd symmetry of the sine function. Therefore we can write the set of fixed points of the map \eqref{motion4_discrete_delta0} for a certain choice on $\alpha$ as follows
\beq\label{fixed_complete}
U(\alpha) = \bigg\{ \tilde x(\alpha) ( 1, 1, \beta, \beta ) \bigg\}_{\sin(\tilde x(\alpha)) = \alpha \tilde x(\alpha)} \,.
\eeq
Since $\alpha > 0$, \eqref{fixed_complete} is always a finite set of points. For a certain value of $\alpha$, the closure of $U$ is equal to the closed interval that joins the smallest and largest solutions of \eqref{xsinx}; therefore $\overline{U} \ne \mathbb{R}^4$ and $U$ is not dense for any $\alpha > 0$.
\end{proof}
Disregarding for a moment the original domains we imposed for the parameters, we have the following. \\
\begin{cor}
The set of fixed points of the map \eqref{xsinx} is not dense in $\mathbb{R}^4$ for any real value of the parameters.
\end{cor}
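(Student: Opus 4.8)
The plan is to observe that the proof of Proposition~\ref{prop_density} invoked the sign conditions $a>0$, $b>0$ (hence $\alpha>0$) only in order to upgrade the conclusion "the fixed‑point set is finite'' to "it is finite'', whereas for mere non‑density a strictly weaker structural fact suffices, and that fact requires no sign restriction on the parameters whatsoever (we retain only $ab\neq0$, without which the map \eqref{motion4_discrete_delta0} is not even defined, so this is the correct reading of "any real value of the parameters'').

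First I would replay the reduction carried out in the proof of Proposition~\ref{prop_density}, checking line by line that positivity is never used. A fixed point of \eqref{motion4_discrete_delta0} satisfies $y=x$ and $\omega=z$ from its first and third components, and summing the two remaining scalar equations \eqref{period1a}--\eqref{period1b} gives $z=\beta x$ with $\beta=-\frac{k_1+a}{a}$. Hence every fixed point has the form $x\,(1,1,\beta,\beta)$, where $x$ solves \eqref{xsinx} when $c\neq0$ and solves $\alpha x=0$ when $c=0$. The division by $b$ in the summation step and the $a$ in the denominator of $\beta$ are legitimate precisely because $ab\neq0$; nothing here needs $a,b,\alpha,c$ to have a definite sign.

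Next I would record that, whatever the values of the parameters, the solution set for $x$ falls into one of four cases: a finite set (when the coefficient of $x$ in \eqref{xsinx} is a nonzero real, since then all solutions lie in a bounded interval and are isolated by real‑analyticity of $\sin x-\text{(linear)}$); the discrete set $\pi\mathbb{Z}$ (when that coefficient vanishes, i.e. $\alpha=0\neq c$); the single point $\{0\}$ (when $c=0$ and $\alpha\neq0$); or all of $\mathbb{R}$ (the degenerate case $c=\alpha=0$). In every one of these cases the fixed‑point set is contained in the one‑dimensional linear subspace $L:=\{x(1,1,\beta,\beta):x\in\mathbb{R}\}\subset\mathbb{R}^4$.

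Finally I would conclude: $L$ is closed and $L\neq\mathbb{R}^4$, so the closure of the set of fixed points is contained in $L\subsetneq\mathbb{R}^4$, and therefore that set is not dense. There is essentially no obstacle to overcome; the only point needing a word of care is the degenerate subcase $c=\alpha=0$, in which the \emph{entire} line $L$ consists of fixed points, and one must simply note that a line, though infinite, is still nowhere dense in $\mathbb{R}^4$. The "main difficulty'' is thus purely a matter of bookkeeping: making explicit that "any real value of the parameters'' means any values with $ab\neq0$, so that \eqref{motion4_discrete_delta0} makes sense, and checking that the reduction of Proposition~\ref{prop_density} survives the removal of the positivity hypotheses.
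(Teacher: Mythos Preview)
Your proposal is correct and takes a genuinely different, cleaner route than the paper. The paper proceeds by case analysis on the sign of $\alpha$: for $\alpha<0$ it appeals to symmetry with Proposition~\ref{prop_density}, and for $\alpha=0$ it writes down the discrete set $U(0)=\{2k\pi(1,1,\beta,\beta)\}_{k\in\mathbb{Z}}$ and then argues, via a somewhat awkward search for open intervals in $\mathbb{R}$, that this set misses an open subset of $\mathbb{R}^4$. You bypass the case split entirely by observing that, regardless of the parameter values (and regardless of whether the $x$-solution set is finite, the lattice $\pi\mathbb{Z}$, a point, or all of $\mathbb{R}$), every fixed point lies on the one-dimensional linear subspace $L=\mathrm{span}\{(1,1,\beta,\beta)\}$, and a proper closed linear subspace of $\mathbb{R}^4$ is automatically nowhere dense. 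This is both shorter and strictly more general: the paper tacitly assumes $c\neq0$ so that \eqref{xsinx} makes sense, whereas your containment argument absorbs the degenerate case $c=\alpha=0$ (in which the whole line $L$ consists of fixed points) without any extra work. The only thing the paper's approach buys is the finer information that for $\alpha\neq0$ the fixed-point set is actually finite, but that precision is irrelevant to the density question.
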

\begin{proof}
Proposition \ref{prop_density} is symmetrically generalizable for $\alpha < 0$. For $\alpha = 0$ and remembering \eqref{fixed_complete}, the set of fixed points of the map \eqref{motion4_discrete_delta0} is given by
\beq
U(0) = \bigg\{ 2 k \pi ( 1, 1, \beta, \beta ) \bigg\}_{k \in \mathbb{Z}} \,.
\eeq
The points of $U$ are uniquely defined by the real number $\tilde x$, so it is sufficient to find an open interval of $\mathbb{R}$ that contains $\tilde x$ but does not contain $\beta \tilde x$; furthermore, it is sufficient to find this interval for just one point of $U$. A trivial choice for $k = 0$ is $I = (0, 2 \tilde x)$ if $\tilde x > 0$, $I = (- 2 \tilde x, 0)$ if $\tilde x < 0$ or $I = (-\varepsilon, \varepsilon)$ with $\varepsilon < \pi$ if $\tilde x = 0$; in any case there exists an open set that does not intersect $U$, therefore $U$ is not dense.
\end{proof}
As an example we briefly discuss the stability of the fixed points through the following numerical example. If we put
\beq\begin{split}
&a = b = 1 \\
&k_1 = k_2 = 0.5 \\
&c = 3
\end{split}\eeq
the \eqref{constraint_c} holds and the system \eqref{motion4_discrete_delta0} admits the following fixed points
\beq\begin{split}
&(0, 0, 0, 0); \quad \pm 0.71623935 \bigg( 1, 1, -\frac{3}{2}, -\frac{3}{2} \bigg) \,.
\end{split}\eeq
We can easily establish the stability of these fixed points is verified by substituting in the \eqref{eigenvalues} the previous values for the parameters, to get
\beq
4 \lambda^4 - 12 \lambda^2 \bigg( \cos x_n - \frac{1}{2} \bigg) + 1 = 0 \,.
\eeq
The fixed point associated to $x = 0$ has four real eigenvalues and two of them have magnitude greater than one, so it is a hyperbolic point; the two fixed points associated to $x = \pm 0.71623935$ have instead four complex eigenvalues, so they are not hyperbolic. Notice that the parity of the cosine in the \eqref{eigenvalues} ensures us that the two symmetric fixed points have always the same stability.

We consider the density of the set of the fixed points since, as we will see, it is strongly suggested that the sets of periodic points of the map can be generally reduced to the set \eqref{fixed_complete}, so that from Proposition \ref{prop_density} immediately follows that the map \eqref{motion4_discrete_delta0} is not chaotic in the sense of Devaney for any choice of the parameters.

\subsection{Periodic points of period $2$}
We look for the $2$-periodic points of the map \eqref{motion4_discrete_delta0}, that is $x_{n+2} = x_n$, etc. We have
\begin{subequations}\label{2periodic_eq}
\beq
x_{n+2} = y_{n+1} = \frac{1}{ab} \bigg\{ -(a+b) k_1 x_n + a k_2 z_n +ac \sin x_n \bigg\} = x_n
\eeq
\beq
y_{n+2} = \frac{1}{ab} \bigg\{ -(a+b) k_1 x_{n+1} + a k_2 z_{n+1} +ac \sin x_{n+1} \bigg\} = y_n
\eeq
\beq
z_{n+2} = \omega_{n+1} = \frac{1}{ab} \bigg\{ a k_1 x_n - a k_2 z_n -ac \sin x_n \bigg\} = z_n
\eeq
\beq
\omega_{n+2} = \frac{1}{ab} \bigg\{ a k_1 x_{n+1} - a k_2 z_{n+1} -ac \sin x_{n+1} \bigg\} = \omega_n \,.
\eeq
\end{subequations}
\\
\begin{prop}\label{prop_density2}
The set of $2$-periodic points of the map \eqref{motion4_discrete_delta0} is not dense in $\mathbb{R}^4$.
\end{prop}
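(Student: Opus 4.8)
\section*{Proof proposal for Proposition \ref{prop_density2}}

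The plan is to exploit the ``leapfrog'' structure of \eqref{motion4_discrete_delta0}: the updated configuration variables $x_{n+1},z_{n+1}$ are just the old velocities $y_n,\omega_n$, while the updated velocities $y_{n+1},\omega_{n+1}$ depend only on the old configuration pair $(x_n,z_n)$. I would first make this precise by introducing the planar map
\beq
g(u,v) := \tfrac{1}{ab}\bigl( -(a+b)k_1 u + a k_2 v + ac\sin u,\ \ a k_1 u - a k_2 v - ac\sin u \bigr) =: \bigl(g_1(u,v),\, g_2(u,v)\bigr),
\eeq
so that \eqref{motion4_discrete_delta0} reads $f(x,y,z,\omega) = \bigl(y,\, g_1(x,z),\, \omega,\, g_2(x,z)\bigr)$. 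A direct substitution then gives $f^2(x,y,z,\omega) = \bigl(g_1(x,z),\, g_1(y,\omega),\, g_2(x,z),\, g_2(y,\omega)\bigr)$, i.e.\ after relabelling the coordinates as $(x,z,y,\omega)$ the second iterate is the product map $g\times g$. Hence $f^2(p)=p$ \emph{if and only if} $(x,z)\in\operatorname{Fix}(g)$ and $(y,\omega)\in\operatorname{Fix}(g)$.

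Next I would identify $\operatorname{Fix}(g)$ with the system \eqref{period1a}--\eqref{period1b} already solved inside the proof of Proposition \ref{prop_density}: the equations $g_1(x,z)=x$ and $g_2(x,z)=z$ are precisely \eqref{period1a} and \eqref{period1b}. Summing them forces $z=\beta x$ with $\beta = -\tfrac{k_1+a}{a}$, and back-substitution yields the transcendental equation $\sin x = \tfrac{\alpha}{c}x$ of \eqref{xsinx}, where $\alpha>0$. Since $\alpha>0$, the line $x\mapsto\tfrac{\alpha}{c}x$ leaves the band $[-1,1]$ for $|x|$ large, so this equation has only finitely many roots $\tilde x_1,\dots,\tilde x_m$ (and if $c=0$ it forces $x=0$ outright); therefore $\operatorname{Fix}(g) = \{(\tilde x_i,\beta\tilde x_i): i=1,\dots,m\}$ is a finite set, which is exactly the content of \eqref{fixed_complete}.

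Combining the two steps, the set of $2$-periodic points of \eqref{motion4_discrete_delta0} is $\operatorname{Fix}(g)\times\operatorname{Fix}(g)$ in the relabelled coordinates, that is the finite collection $\{(\tilde x_i,\tilde x_j,\beta\tilde x_i,\beta\tilde x_j): 1\le i,j\le m\}$. A finite subset of $\mathbb{R}^4$ is closed and proper, hence not dense, which proves the claim; in fact one obtains the stronger statement that the $2$-periodic set is finite (so any genuine period-$2$ orbit, being contained in it, contributes nothing to density either).

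I expect the only delicate point to be the reduction step: verifying that under $f^2$ the pairs $(x,z)$ and $(y,\omega)$ genuinely decouple and that \emph{each} block is governed by the very same map $g$ whose fixed set was analysed for Proposition \ref{prop_density}. Once that block structure is in hand, no further computation is needed and the conclusion is immediate.
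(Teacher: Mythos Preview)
Your argument is correct and follows the same route as the paper: both reduce the $2$-periodic condition to two independent copies of the fixed-point system \eqref{period1a}--\eqref{period1b}, one in the variables $(x,z)$ and one in $(y,\omega)$, and then invoke the finiteness established in Proposition~\ref{prop_density}. Your packaging via the planar map $g$ and the factorisation $f^2 \cong g\times g$ is a bit more explicit than the paper's direct computation; in particular you correctly identify the full $2$-periodic set as $\operatorname{Fix}(g)\times\operatorname{Fix}(g)$, whereas the paper's phrasing ``we get exactly the set \eqref{fixed_complete}'' slightly understates it (that set is only the diagonal $x=y$, $z=\omega$), though the conclusion is unaffected since the product of two finite sets is still finite.
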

\begin{proof}
We substitute the terms $x_{n+1} = y_n$, $z_{n+1} = \omega_n$ in the \eqref{2periodic_eq}, to get
\begin{subequations}
\beq\label{period2a}
-(a+b) k_1 x + a k_2 z + ac \sin x = ab x
\eeq
\beq\label{period2b}
-(a+b) k_1 y + a k_2 \omega + ac \sin y = ab y
\eeq
\beq\label{period2c}
a k_1 x - a k_2 z - ac \sin x = ab z
\eeq
\beq\label{period2d}
a k_1 y - a k_2 \omega - ac \sin y = ab \omega \,.
\eeq
\end{subequations}
By summing the \eqref{period2a} and the \eqref{period2c} and summing the \eqref{period2b} and the \eqref{period2d} we obtain
\begin{subequations}
\beq
z = -\frac{k_1 + a}{a} x
\eeq
\beq
\omega = -\frac{k_1 + a}{a} y \,,
\eeq
\end{subequations}
that is we get exactly the set \eqref{fixed_complete}, that has been shown to be not dense; so the map \eqref{motion4_discrete_delta0} has not a dense set of $2$-periodic points.
\end{proof}

\subsection{Periodic points of period $3$ and conjecture}
We now look for the $3$-periodic points of the map \eqref{motion4_discrete_delta0}, that is $x_{n+3} = x_n$, etc. We have
\begin{subequations}\label{3periodic_eq}
\beq
x_{n+3} = y_{n+2} = \frac{1}{ab} \bigg\{ -(a+b) k_1 x_{n+1} + a k_2 z_{n+1} + ac \sin x_{n+1} \bigg\} = x_n
\eeq
\beq
y_{n+3} = \frac{1}{ab} \bigg\{ -(a+b) k_1 x_{n+2} + a k_2 z_{n+2} +ac \sin x_{n+2} \bigg\} = y_n
\eeq
\beq
z_{n+3} = \omega_{n+2} = \frac{1}{ab} \bigg\{ a k_1 x_{n+1} - a k_2 z_{n+1} -ac \sin x_{n+1} \bigg\} = z_n
\eeq
\beq
\omega_{n+3} = \frac{1}{ab} \bigg\{ a k_1 x_{n+2} - a k_2 z_{n+2} -ac \sin x_{n+2} \bigg\} = \omega_n \,.
\eeq
\end{subequations}
\\
\begin{prop}\label{prop_density3}
The set of $3$-periodic points of the map \eqref{motion4_discrete_delta0} is not dense in $\mathbb{R}^4$.
\end{prop}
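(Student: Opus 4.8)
The plan is to repeat the elimination carried out for periods $1$ and $2$, the only new feature being that a genuine $3$-cycle carries three distinct values of each coordinate, so that the defining relations close up not into a pointwise constraint but into a $3\times 3$ linear system for the three consecutive values of the first coordinate. From that system one reads off that the whole set of $3$-periodic points is \emph{bounded} in $\mathbb{R}^4$, which already forbids density; in particular one never needs to decide whether genuine $3$-cycles actually exist.

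First I would record an identity valid along every orbit of \eqref{motion4_discrete_delta0}: adding the $y$- and $\omega$-equations cancels the $z$- and $\sin$-terms and leaves $y_{n+1}+\omega_{n+1}=-\frac{k_1}{a}x_n$, that is $x_{n+2}+z_{n+2}=-\frac{k_1}{a}x_n$ (using $x_{n+2}=y_{n+1}$, $z_{n+2}=\omega_{n+1}$). Evaluating this at $n$, $n+1$, $n+2$ and imposing $x_{n+3}=x_n$, $z_{n+3}=z_n$ gives
\[
z_{n}=-\tfrac{k_1}{a}x_{n+1}-x_{n},\qquad z_{n+1}=-\tfrac{k_1}{a}x_{n+2}-x_{n+1},\qquad z_{n+2}=-\tfrac{k_1}{a}x_{n}-x_{n+2},
\]
so the three third-coordinates along the cycle are linear functions of the three first-coordinates. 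Substituting these into the equations of \eqref{3periodic_eq} coming from $x_{n+2}=y_{n+1}=\frac{1}{ab}\big[-(a+b)k_1x_n+ak_2z_n+ac\sin x_n\big]$ and its two cyclic shifts, and abbreviating $P:=(a+b)k_1+ak_2$, $R:=k_1k_2$, $q:=ab$, the system collapses to
\[
N\,\mathbf{x}=ac\,\mathbf{s},\qquad
N:=\begin{pmatrix}P&R&q\\ q&P&R\\ R&q&P\end{pmatrix},\qquad
\mathbf{x}:=\begin{pmatrix}x_n\\ x_{n+1}\\ x_{n+2}\end{pmatrix},\qquad
\mathbf{s}:=\begin{pmatrix}\sin x_n\\ \sin x_{n+1}\\ \sin x_{n+2}\end{pmatrix},
\]
with $N$ a circulant matrix that does not depend on the orbit.

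The crux is the invertibility of $N$. Since $\det N=P^3+R^3+q^3-3PRq=(P+R+q)\cdot\frac12\big[(P-R)^2+(R-q)^2+(q-P)^2\big]$, I would note that $P+R+q=(a+k_1)(b+k_2)+ak_1>0$ for $a,b>0$ and $k_1,k_2\ge 0$, whereas $P=R=q$ would force both $k_1k_2=ab$ and $(a+b)k_1+ak_2=ab$, hence $(a+b)k_1^2-abk_1+a^2b=0$, a quadratic in $k_1$ whose discriminant equals $-a^2b(4a+3b)<0$ — so $P=R=q$ is impossible. Therefore $\det N>0$, $\mathbf{x}=ac\,N^{-1}\mathbf{s}$, and, since $\|\mathbf{s}\|\le\sqrt3$, we get $\|\mathbf{x}\|\le\sqrt3\,|ac|\,\|N^{-1}\|$: a bound depending only on the parameters. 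Hence $x_n,x_{n+1},x_{n+2}$ are uniformly bounded, and by the displayed formulas so are $z_n,z_{n+1},z_{n+2}$. Since a $3$-periodic point of \eqref{motion4_discrete_delta0} is exactly the quadruple $(x_n,x_{n+1},z_n,z_{n+1})$, the set of such points lies in a fixed ball of $\mathbb{R}^4$, hence is not dense — which is the assertion.

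The only step that is not a mechanical computation is the invertibility of $N$, i.e. excluding the parameter locus where the circulant drops rank; this is exactly where $a,b>0$ enters, through the negative discriminant above. (As a check: setting $x_n=x_{n+1}=x_{n+2}$ in $N\mathbf{x}=ac\,\mathbf{s}$ reproduces \eqref{xsinx}, so the fixed points \eqref{fixed_complete} sit inside the $3$-periodic set as they must, and it is this observation that motivates the conjecture stated next.)
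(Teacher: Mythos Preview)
Your argument is correct and is in fact cleaner and more complete than the paper's. The paper proceeds differently: it sums and subtracts the four period-$3$ relations to express $y$ and $\omega$ as functions of $(x,z)$, substitutes back to obtain a two-variable system of the form
\[
P(x,z)+A\sin x+B\sin\big[P(x,z)+A\sin x\big]=0,\qquad Q(x,z)+C\sin x+D\sin[\cdots]=E\sin[F(x+z)],
\]
and then argues, essentially by analogy with the fixed-point equation $\sin x=\tfrac{\alpha}{c}x$, that the solution set cannot be dense. Your route is genuinely different: you exploit the orbit identity $x_{n+2}+z_{n+2}=-\tfrac{k_1}{a}x_n$ together with $3$-periodicity to eliminate the $z$'s entirely and collapse everything to the circulant system $N\mathbf{x}=ac\,\mathbf{s}$, then prove $\det N=P^3+R^3+q^3-3PRq>0$ via the factored form and the discriminant $-a^2b(4a+3b)<0$. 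This buys you a \emph{uniform} bound $\|\mathbf{x}\|\le\sqrt{3}\,|ac|\,\|N^{-1}\|$ depending only on the parameters, so non-density follows from boundedness rather than from a structural analogy. Two further dividends: your method never needs the nested-sine reduction (which the paper leaves rather vague), and it points to a strategy for the conjectured Proposition~\ref{prop_conj}, since for any odd period $p$ the same elimination yields a $p\times p$ circulant with the three nonzero entries $P,R,q$ placed at shifts $0,\pm2$, whose invertibility is the only obstacle.
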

\begin{proof}
We substitute the terms $x_{n+1} = y_n$, $z_{n+1} = \omega_n$ and $x_{n+2} = y_{n+1}$, $z_{n+2} = \omega_{n+1}$ in the \eqref{3periodic_eq}, to get
\begin{subequations}
\beq\label{period3a}
-(a+b) k_1 y_n + a k_2 \omega_n + ac \sin y_n = ab x_n
\eeq
\beq\label{period3b}
-(a+b) k_1 y_{n+1} + a k_2 \omega_{n+1} + ac \sin y_{n+1} = ab y_n
\eeq
\beq\label{period3c}
a k_1 y_n - a k_2 \omega_n - ac \sin y_n = ab z_n
\eeq
\beq\label{period3d}
a k_1 y_{n+1} - a k_2 \omega_{n+1} - ac \sin y_{n+1} = ab \omega_n \,.
\eeq
\end{subequations}
By summing and subtracting the \eqref{period3a} and the \eqref{period3c} we obtain
\begin{subequations}\label{period3_constraint}
\beq
-k_1 y = a (x + z)
\eeq
\beq
k_2 \omega = -(a + 2b) z - a x - c \sin \bigg[ -\frac{a}{k_1} (x+z) \bigg] \,.
\eeq
\end{subequations}
We can finally explicit $y_{n+1}$ and $\omega_{n+1}$ in \eqref{period3b} and \eqref{period3d}, substitute the previous constraints and get
\begin{subequations}\label{period3}
\beq
P(x, z) + A \sin x + B \sin \bigg[ P(x, z) + A \sin x \bigg] = 0
\eeq
\beq
Q(x, z) + C \sin x + D \sin \bigg[ Q(x, z) + C \sin x \bigg] = E \sin \bigg[ F (x + z) \bigg] \,,
\eeq
\end{subequations}
where $A, \dots, F$ are constants (that depend on the usual parameters) and $P(x, z)$, $Q(x, z)$ are polynomial functions. \\Regardless the specific form of the system, we have the following set of $3$-periodic points:
\beq
U = \bigg\{ \bigg( \tilde x, \tilde y(\tilde x, \tilde z), \tilde z, \tilde \omega(\tilde x, \tilde z) \bigg) \bigg\} \,,
\eeq
where $(\tilde x, \tilde z)$ are solutions of the \eqref{period3} and $\tilde y$, $\tilde \omega$ are obtained from the \eqref{period3_constraint}, for a certain choice of the parameters. Once we take a proper redefinition of the sine's arguments, we get for the $3$-periodic points an equation analogous to \eqref{xsinx}; therefore we can generalize the Proposition \ref{prop_density} to this case and conclude that the set of $3$-periodic points is not dense in $\mathbb{R}^4$.
\end{proof}
Given the previous results, we are in position to propose the following. \\
\begin{prop}[conjectured]\label{prop_conj}
The map \eqref{motion4_discrete_delta0} has not a dense set of periodic points.
\end{prop}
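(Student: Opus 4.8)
The plan is to reduce the four-dimensional statement to a two-dimensional one, then to a scalar recursion, and finally to push the summation argument of Propositions~\ref{prop_density}--\ref{prop_density3} as far as it will go.

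First I would exploit the skew structure of \eqref{motion4_discrete_delta0}. Writing the state as $(\xi,\eta)$ with $\xi=(x,z)$ and $\eta=(y,\omega)$, the first and third equations give $\xi_{n+1}=\eta_n$, while the second and fourth express $\eta_{n+1}$ through $\xi_n$ alone, namely $\eta_{n+1}=G(\xi_n)$ with
\[
G(x,z)=\frac{1}{ab}\Bigl(-(a+b)k_1 x+ak_2 z+ac\sin x,\ ak_1 x-ak_2 z-ac\sin x\Bigr).
\]
Hence $f(\xi,\eta)=(\eta,G(\xi))$, so $f^{2}=G\times G$ on $\mathbb{R}^2\times\mathbb{R}^2$. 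A short bookkeeping with even and odd iterates then shows that $(\xi,\eta)$ is $f$-periodic if and only if $\xi$ and $\eta$ are both $G$-periodic; therefore $\mathrm{Per}(f)=\mathrm{Per}(G)\times\mathrm{Per}(G)$ and $\overline{\mathrm{Per}(f)}=\overline{\mathrm{Per}(G)}\times\overline{\mathrm{Per}(G)}$. The conjecture is thus equivalent to: $\mathrm{Per}(G)$ is not dense in $\mathbb{R}^2$.

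Next I would analyse $G$. The two components of $G(x,z)$ always sum to $-\tfrac{k_1}{a}x$, so in the coordinates $u=x$, $w=x+z$ the map becomes $w_{n+1}=-\tfrac{k_1}{a}u_n$ and $u_{n+1}=\tfrac{1}{ab}\bigl(-((a+b)k_1+ak_2)u_n+ak_2 w_n+ac\sin u_n\bigr)$, a Hénon-type map; eliminating $w$ gives the scalar second-order recursion
\[
ab\,u_{j+1}+\bigl((a+b)k_1+ak_2\bigr)u_j+k_1k_2\,u_{j-1}=ac\sin u_j .
\]
The roots of its characteristic equation $ab\lambda^2+((a+b)k_1+ak_2)\lambda+k_1k_2=0$ are real and negative, since $((a+b)k_1+ak_2)^2\ge(bk_1+ak_2)^2\ge 4abk_1k_2$; moreover $\tfrac{(a+b)k_1+ak_2}{ab}\ge 2\sqrt{k_1k_2/(ab)}$, strictly when $k_1>0$, so the associated linear map is never of finite order. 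A $q$-periodic point of $G$ is exactly a $q$-periodic solution of this recursion, and summing it around the orbit collapses — as the pairwise summations did for $q\le 3$ — to the single relation $\bigl(ab+k_1k_2+(a+b)k_1+ak_2\bigr)\sum_j u_j=ac\sum_j\sin u_j$, of exactly the ``$\sin(\mathrm{linear})=\mathrm{linear}$'' type of \eqref{xsinx}. Combined with the fact that each $\mathrm{Fix}(G^{q})$ is the zero set of the nonzero real-analytic map $G^{q}-\mathrm{id}$, hence nowhere dense, this shows $\mathrm{Per}(G)$ is meager; and when $c=0$ the map $G$ is linear with the spectral properties just recorded, so $\mathrm{Per}(G)$ is at most a single line through the origin and the statement holds outright.

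The hard part is to upgrade ``meager'' to ``not dense'' when $c\ne 0$, that is, to make the collapse uniform in the period $q$: a countable union of nowhere-dense sets can be dense, so the analyticity argument alone is not enough. The summation identity controls only the orbit average $\tfrac1q\sum_j u_j$, not the individual iterates, so a priori a long periodic orbit of the nonlinear recursion could have large excursions. Ruling this out — equivalently, producing one open ball of $\mathbb{R}^2$ that no $G$-periodic orbit ever meets, whatever its period — is precisely what Propositions~\ref{prop_density}--\ref{prop_density3} achieve for $q\le 3$ only. Absent such a uniform estimate (or a global conjugacy/normal-form argument for the Hénon-type map $G$ that simultaneously controls all of its periodic orbits), the statement can be offered only as a conjecture, the cases $q\le 3$ and $c=0$ providing the supporting evidence.
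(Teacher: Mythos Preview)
The paper offers no proof of this proposition --- it is explicitly labelled ``conjectured'', and the only argument given is the one-line heuristic that higher periods produce ``a series of concatenate sine functions'' to which the low-period reasoning should extend. Your treatment goes well beyond this and along a genuinely different route: the reduction $\mathrm{Per}(f)=\mathrm{Per}(G)\times\mathrm{Per}(G)$ via the skew structure $f(\xi,\eta)=(\eta,G(\xi))$, the further passage to the scalar H\'enon-type recursion in $(u,w)$, the complete resolution of the linear case $c=0$, and the meagreness of $\mathrm{Per}(G)$ via real-analyticity are all correct and none of them appear in the paper. Where the paper proceeds period-by-period and purely computationally, your approach is structural and delivers strictly more (a theorem for $c=0$, a Baire-category statement in general). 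One small point worth a line of justification: your assertion that $G^{q}-\mathrm{id}$ is never identically zero follows because $G-L$ is bounded, so $G^{q}=\mathrm{id}$ would force $(L^{q}-\mathrm{id})(v)$ to be bounded in $v$ and hence $L^{q}=\mathrm{id}$, which your spectral analysis of $L$ already excludes.

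You are also right about the residual gap, and you name it more precisely than the paper does: the summation identity over a $q$-orbit constrains only $\sum_{j}u_{j}$, not the individual iterates, so the step from ``each $\mathrm{Fix}(G^{q})$ is nowhere dense'' to ``$\bigcup_{q}\mathrm{Fix}(G^{q})$ is not dense'' needs a ball avoided by all periodic orbits uniformly in $q$, and neither you nor the paper supply one. Your proposal is therefore an honest sharpening of the conjecture rather than a proof --- matching the paper's status, but with substantially more content behind it.
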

The validity of the previous conjecture is suggested by the fact that, by increasing the period of the periodic points, we get a series of concatenate sine functions, so that the previous considerations can be extended to an arbitrary period. If the Proposition \ref{prop_conj} is true, the map \eqref{motion4_discrete_delta0} is not chaotic in the sense of Devaney for a choice of parameters  ($k_2 \ne 0$) that has been numerically proven as a general case of chaotic motion for the Ziegler pendulum. Anyway, there is no reason for the regular or chaotic behavior to be preserved by passing from the continuous version to a discrete version of a specific dynamical system, or viceversa. As an example, the logistic map \cite{May} shows the opposite behavior, since the original discrete system shows a very chaotic behavior for a certain choice of the parameter, as one can see from the associated bifurcation diagram, while its continuous version is a simple ordinary equation with simple and regular solutions.

We may ask ourselves whether the discrete map associated to the Ziegler pendulum is chaotic in the sense of Devaney if we add a dissipative force in the equations, for instance the terms associated to the friction on the pins. Assuming as before $\Delta = 0$ and with a proper redefinition of the parameters, the equations for the discrete system become
\begin{subequations}\label{motion4_discrete_delta0_friction}
\beq
x_{n+1} = y_n
\eeq
\beq\begin{split}
y_{n+1} = \frac{1}{ab} [& -(a+b) k_1 x_n + a k_2 z_n + ac \sin x_n - \\
&- a \mu_O \omega \cos (2 z_n) - a \mu_A y \cos (x_n + 2 z_n) ]
\end{split}\eeq
\beq
z_{n+1} = \omega_n
\eeq
\beq\begin{split}
\omega_{n+1} = \frac{1}{ab} [& a k_1 x_n - a k_2 z_n - ac \sin x_n + a \mu_O \omega \cos (2 z_n) + a \mu_A y \cos (x_n + 2 z_n) ] \,,
\end{split}\eeq
\end{subequations}
that is we modify the original equations \eqref{motion4_discrete_delta0} by adding the same term in the sequence defining $y_n$ (with a negative sign) and in the sequence defining $\omega_n$ (with a positive sign). It is not difficult to see that this symmetry in the equations allows us to generalize all the previous considerations. Given that, regardless an explicit study of this case, we can conjecture that also the map \eqref{motion4_discrete_delta0_friction} is not chaotic in the sense of Devaney, since it has the same identical set of fixed points and similar sets of periodic points when compared with the map \eqref{motion4_discrete_delta0}.

\section{Conclusion}\label{sec_conclusion}
In this paper we studied the dynamics of a generalized Ziegler pendulum, when subject to further external forces, both potential and non-potential.

We found that, in general, the presence of gravity destroys the integrability of the system that is present in the original dynamical system if two specific symmetries on the parameters hold, even if isolated periodic orbits can be found for certain choices of parameters and initial conditions. By adding instead a further elastic potential energy that preserves the cyclicity of one of the two generalized coordinates, the integrable cases of the original system survive, so that closed trajectories and families of periodic solutions arise. The physical version of the system does not suggest any relevant considerations, that is the regular or chaotic behavior of the system is essentially the same as if we consider a mathematical, physical or mixed double pendulum.

An interesting behavior is observed if the system is subject to a dissipative force. In the case of fluid-like friction we found a specific symmetry on the friction coefficients under which the system admits attractive points, while in general it approaches to limit cycles. Another formulation of dissipative force has been studied, by considering torsional friction forces on the pendulum pins. By numerically studying a progressive breaking of the non-Hamiltonian symmetry $\Delta = 0$, we observe the emergence of a possible strange attractor, that seems to arise in a different way if compared with the strange attractors such as the Lorenz attractor. Furthermore, a brief study of the transition to chaos has been performed, with the aim of finding a threshold value for the friction coefficients able to distinguish between an approximately regular motion and chaotic orbits.

Finally, we found that the discrete map associated to the Ziegler pendulum, if the non-Hamiltonian symmetry $\Delta = 0$ holds, does not have dense sets of periodic points up to period $3$. A qualitative analysis of the periodic points suggests that the map, with this constraint, is not chaotic in the sense of Devaney, for a choice of parameters that corresponds generally to chaotic orbits for the continuous Ziegler pendulum.

The original Ziegler pendulum introduced in \cite{Ziegler} has found several applications in the last decades. The variants analyzed in this work, in particular the one that takes into account the presence of friction on the pins, may be considered as more realistic models in order to improve the mechanical and engineering applications of this dynamical system; furthermore, the occurrence of a limit cycle in presence of fluid-like friction may be taken into account for further studies on optimization and control of the system. Finally, the results here presented relating the discrete map associated to the dynamical system may be developed to further analyze the definition of chaos in the sense of Devaney.

In view of possible future developments several questions can now be asked.
\begin{itemize}
\item Given the nature of the sets of periodic points, it should be possible, even if not necessary easy, to prove the Proposition \ref{prop_conj} by induction on the period; the main difficulty is related to the quite complex form associated to the equations solved by the $k$-periodic points for periods $k > 2$.
\item For all the variants considered, it is evident that the non-Hamiltonian symmetry $\Delta = 0$ plays a fundamental role in the distinction between regular and chaotic motion, but the reason is not trivial.
\item We may wonder why we have a periodic orbit in presence of gravity for the choices of parameters shown in Figure \ref{gravity_periodic_compare}; for these considerations a deep study of the applications of KAM theory for this specific dynamical system could be considered.
\item The presence of dissipative forces on the pins seems to give rise to a strange attractor, at least for a specific choice of parameters; in order to state this in a rigorous way, an extensive and detailed numerical study for several values of parameters and initial conditions must be performed.
\item Since the Hamilton equations of the standard Ziegler pendulum present a periodic generalized force, the Poincaré-Mel'nikov method \cite{Poincaré, Melnikov} can be applied to the system, in order to verify in different fashion the chaotic behavior of the system with the breaking of the mentioned symmetries.
\end{itemize}

\backmatter

\bmhead{Acknowledgements} The authors are grateful to the anonymous reviewer for his useful suggestions.

\section*{Declarations}

\bmhead{Funding} FIRD 2022 contract - University of Ferrara.

\bmhead{Conflict of interest/Competing interests} The authors have no conflict or competing interests to disclose.

\bmhead{Ethics approval and consent to participate} Not applicable.

\bmhead{Consent for publication} Not applicable.

\bmhead{Data availability} The data that supports the fundings of this study are available within the article.

\bmhead{Materials availability} Not applicable.

\bmhead{Code availability} Not applicable.

\bmhead{Author contribution} The authors contributed in equal parts to the paper.

%\bigskip
%\begin{flushleft}%
%Editorial Policies for:

%\bigskip\noindent
%Springer journals and proceedings: \url{https://www.springer.com/gp/editorial-policies}

%\bigskip\noindent
%Nature Portfolio journals: \url{https://www.nature.com/nature-research/editorial-policies}

%\bigskip\noindent
%\textit{Scientific Reports}: \url{https://www.nature.com/srep/journal-policies/editorial-policies}

%\bigskip\noindent
%BMC journals: \url{https://www.biomedcentral.com/getpublished/editorial-policies}
%\end{flushleft}
%\bigskip

\bibliography{sn-bibliography}% common bib file

\end{document}